\numberwithin{equation}{section}
\theoremstyle{plain}
\newtheorem{theorem}{Theorem}[section]
\newtheorem{lemma}[theorem]{Lemma}
\theoremstyle{definition}
\theoremstyle{remark}
\newcommand{\rev}[1]{\textcolor{black}{#1}}
\newcommand{\gao}[1]{\textcolor{black}{#1}}
\begin{document}
	\begin{center}
		\Large \bf  Optimal Market Making in the Presence of Latency
	\end{center}
	\author{}
	\begin{center}
		{Xuefeng
			Gao}\,\footnote{Department of Systems
			Engineering and Engineering Management, The Chinese University of Hong Kong, Shatin, N.T. Hong Kong;
			Email: xfgao@se.cuhk.edu.hk.},
		Yunhan Wang\,\footnote{
			Email: yunhanwang25@163.com.
		}
	\end{center}
	
	\begin{abstract}
This paper studies optimal market making for large--tick assets in the presence of latency.
We consider a random walk model for the asset price, and formulate the market maker's optimization problem using Markov Decision Processes (MDP). We characterize the value of an order and show that it plays the role of one-period reward in the MDP model.
Based on this characterization, we provide explicit criteria for assessing the profitability of market making when there is latency. Under our model, we show that
a market maker can earn a positive expected profit if there are sufficient uninformed market orders hitting the market maker's limit orders compared with the rate of price jumps, and the trading horizon is sufficiently long. In addition,
our theoretical and numerical results suggest that
latency can be an additional
source of risk and latency impacts negatively the performance of market makers. 
\end{abstract}

\section{Introduction}

A market maker in a security market provides liquidity to other investors by quoting bids and offers, hoping to make a profit from the bid-ask spread while avoiding accumulating a large net position in the assets traded. Market makers play a crucial role in financial markets,
\rev{because} the liquidity they offer allows investors to obtain immediate executions of their orders, and this flexibility facilitates market efficiency and functioning.
Traditionally in equity markets, there are `official' or designated market makers who have entered into contractual agreements with exchanges and they are under certain affirmative obligations to stand ready to supply liquidity.
In recent decades, major financial markets have became electronic, and a modern exchange is typically operated as an electronic limit order book system under a price-time priority rule, where all the outstanding limit orders are aggregated for market participants to view, see \cite{Parlour2008} for an overview. As a result, any professional trader can adopt market making as a trading strategy, often through computer-based electronic trading decisions and automated trade executions. This work focuses on such `unofficial' electronic market makers, who can enter and exit market at will or set buy
and sell prices at any level.

In this paper, we study optimal market making with latency taken into consideration. There are many different definitions for the term “latency” in the literature. Following \rev{\cite{Hasbrouck2013, Cartea2018}},
we define latency as `the time delay between an exchange sending market data to a trader, the trader processing information and deciding to trade, and the exchange receiving the order from the trader'. Latency has always been important for participants including market makers in electronic trading where markets trade at high speed, see e.g. \cite{AFM, gomber2015high, Moallemi}. Indeed, it is essential for market makers to be able to respond
rapidly to newly available price information and changing market conditions by placing or
canceling orders at low latency.

We consider the market making problem of an agent trading a large-tick asset. Following \cite{Eisler2012, Rosenbaum2015}, a
large-tick asset is defined as an asset with the bid-ask spread almost always equal to its minimum value of one tick. Large-tick assets are typically liquidly traded. Indeed, if the asset price is low, then the tick size relative to the asset price is larger, and hence liquidity providers can earn a higher revenue margin (\cite{YaoYe2018}).
Empirically, it has also been found that electronic market makers take on a prominent role in liquidity provision for large-tick stocks, see e.g. \cite{Ohara2015}.

We formulate a discrete-time trading model to investigate optimal market making in the presence of latency.
We consider an `unofficial' market maker trading on a single venue and we model the asset price using a Compound Poisson process. Similar as in \cite{Moallemi, Stoikov2016}, we assume a constant (absolute) latency $\Delta \tau \ge 0$.
The maker can quote a bid order for one unit and an ask order for one unit at \textit{any discrete} prices periodically with a deterministic period
length $\Delta t > \Delta \tau$, with the goal of maximizing his expected profit within a finite horizon.
At each period, the market maker can choose to cancel the old orders that remain in the orderbook, replace them with new orders at appropriate prices, or do nothing.
At the end of the horizon, the market maker liquidates his position and goes home flat.
The optimization problem faced by the market maker is formalized as a discrete-time finite-horizon Markov Decision Process (MDP).
As in the existing optimal market making studies (see, e.g., \cite{Avellaneda, Cartera-book}), we do not include rebate and fees for orders in the model for simplicity. In addition, the market maker we consider is small in the sense that he does not impose price impact onto the market and the decisions do not involve the size of orders.

To our knowledge, our model is the first to study optimal market making with latency.
By a delicate analysis of the MDP together with numerical experiments, we make the following contributions.

First, we characterize and highlight the role of order value in optimal market making. The value of an order measures the difference of its execution price with the fair value of the asset at the time of order execution.
It is referred to as the expected profit of an order in the market microstructure literature, and has been widely used in equilibrium models in finance and economics, see e.g., \cite{sandas2001, Hoffmann2014}. We show that in our MDP model for optimal market making, the value of bid and ask orders from the market maker in each period essentially plays the role of one-period reward, where there is a terminal inventory penalty arising from liquidating inventory positions. We give an explicit characterization of the value of an order under our model and provide simple condition to decide when the order value is positive in the possible presence of latency. See Theorems~\ref{value of orders tau} and \ref{BellmanThm} .



Second, we provide explicit criteria on when electronic market making is profitable under our model. Based on the order value and the structure of the value function for the MDP, we find that the profitability condition reduces to comparing two rates: (1) the rate of the market maker's limit order being `adversely' filled when the market trades through its limit price, e.g., due to a surge of flow of market order eats through the order book and induces price jumps; and (2) the rate of small (or `uninformed') market order flows that hit the market maker's limit orders at best quotes without moving the price.
We show that if the latter rate is smaller than or equal to the former, then the order values are non-positive, and hence
the market maker can not make any profit in trading regardless of the number of quoting periods. On the other hand, if the latter rate is larger, then there exists a bid-ask quote pair whose order values are positive. This allows us to construct an explicit quoting policy so that when the number of quoting periods is sufficiently large, the market maker can earn positive \rev{expected} profit. How many periods is needed depends on model parameters including latency, and we can upper bound it explicitly using values and fill probabilities of bid and ask orders. See Theorem~\ref{net profit}.

Third, we provide qualitative insight on the importance of latency in optimal market making.
Our theoretical and numerical results show that
latency can be an additional
source of risk for market makers.
On one hand, the market maker can send new orders in each period.
Due to the possible price motion in the latency time
window, such new orders will enter undesirable price positions and bear the risk of being executed at an improper price or not being executed at all. This can be reflected as in the decreased value of such new orders. On the other hand, the market maker may have outstanding old orders in the orderbook that have not been executed yet in each period.
Under unfavorable market conditions, the value of these outstanding orders is negative (i.e., earns negative expected profit), and the loss the market maker suffers increases with latency. See Sections~\ref{sec theoretical} and \ref{sec: numerical results}.

\textbf{Related Literature.} We explain the differences between our work and the existing studies.

Our paper is related to the growing body of literature on optimal market making in the quantitative finance literature, see, e.g., \cite{Avellaneda, Gueant2013, Guilbaud, Cartea, Cartera-book, Fordra2015, AitSahhalia} and the references therein. These papers mainly use continuous-time stochastic control approaches to determine optimal quoting strategies for market makers in an expected utility framework. Among them, \cite{Fordra2015} and Chapter~10.2.2 in \cite{Cartera-book} considered large-tick liquid stocks. Both studies focused on a single market maker who submits limit orders only at best bid and best ask prices, so the control is a pair of predictable processes valued in $\{0,1\}$, indicating whether the market maker posts an order or not at the best quotes. \cite{Cartera-book} modeled the asset price as a Brownian motion, and the executions of orders are independent from the price. \cite{Fordra2015} considered a rich model where asset price is modeled by a Markov renewal process, and the market order flows hitting the market maker's quotes is modeled by a Cox process subordinated to asset price. Our model is similar (albeit simpler) in spirit to the one in \cite{Fordra2015}. Our work complements these studies and differs from them in two main aspects: first,
we explicitly consider latency in optimal market making; second,
to focus on the latency effect, we consider simpler price dynamics using a Compound Poisson process, but we allow the market maker to submit orders at any discrete prices, including using market orders.

Our work is also related to the study of latency in algorithm and high frequency trading.
\cite{Moallemi} quantified the cost of latency in an optimal execution problem of a
representative investor using the volatility and the bid-ask spread of an asset. \cite{Stoikov2016} studied a pure market order strategy for optimal liquidation by taking latency into account.
\cite{Lehalle2017} studied an optimal control problem of letting a single limit order rest at the best bid price or cancelling it in the presence of latency.
\gao{\cite{Cartea2018} showed the effect of latency for liquidity-taking orders and studied how to improve fill ratios of marketable orders in foreign exchange markets.
\cite{Cartea2019} analyzed how a liquidity taker chooses the price limit of marketable orders and developed an optimal strategy that balances the tradeoff between the number of missing trades and the costs of walking the book
when there is latency.
}
 \cite{Hoffmann2014} found that low latency allows liquidity providers to reduce their adverse
selection costs by revising stale quotes before picked off.
See also \cite{Hasbrouck2013, Kirilenko, Menkveld2016, Baron2018} and the references therein for further related studies.
Our work differs from these studies primarily in that the problem we consider is optimal market making in the presence of latency.

Finally, our paper is related to studies on
the profitability of market making, see, e.g. \cite{Charkraborty2011} and Chapter 17 of \cite{Bouchaud}. \cite{Charkraborty2011} related the profitability with time series models for the price dynamics and they showed that market making is generally profitable if the price series exhibit mean reversion. \cite{Bouchaud} analyzed whether a market maker who always quotes at best bid and best ask prices can be profitable, using a model-free analysis. They found that such simple market making strategies yield negative profits, and the main challenge for large-tick stocks is to gain time priority in the queue. Our work is different from these studies in that we study the profitability of optimal market making strategies using a stylized MDP model with price modeled as a Compound Poisson process. In particular, the \rev{expected} profit of the market maker in our model is lower bounded by zero as not sending any quotes is also a feasible strategy.

\textbf{Organization.} The rest of the paper is organized as follows. In Section~\ref{model}, we describe the model and formulate the market maker's optimization problem. In Section~\ref{sec:value-order} we define the value of an order. In Section~\ref{sec theoretical}, we present the main theoretical results. In Section~\ref{sec: numerical results}, we present numerical results for a representative large-tick stock using orderbook data from NASDAQ. Finally, Section~\ref{sec:conclusion} concludes. Some technical details of the model and all the proofs of the theoretical results are given in the Appendix.

\section{The MDP model for optimal market making}\label{model}
In this section, we present a finite-horizon MDP to model an electronic market maker who aims to maximize his expected terminal wealth in the presence of latency.


\subsection{Market making in the presence of latency}

To discuss the process of market making in the presence of latency,
we first describe market primitives.
We assume that the bid-ask spread of the asset is exogenously given at constant one tick, which is typical for large-tick liquid assets \citep{Rosenbaum2015}.
We also assume that the \rev{mid-price} $\{p(t): t \ge 0\}$ is a compound Poisson process: 
\begin{equation}
	p(t)=p(0)+\sum\limits_{i=1}^{\mathcal{N}(t)}X_i,
\end{equation}
where $\{\mathcal{N}(t): t\geq0\}$
is a Poisson process with a rate $\lambda$ and $(X_i)_{i=1,2,...}$ are independent and identically distributed random variables taking values $+1$ (tick) and $-1$ both with probability $0.5$. See, e.g., \cite{AitSahhalia, Guo} for similar models. We use symmetric jump size $X_i$ as market makers typically have no directional opinion on the assets they trade.

Next, we describe the market maker's periodic quoting process in the presence of latency. See Figure~\ref{fig:mm-illustration} for a graphical illustration.
During a finite horizon $[0, T]$, the market maker takes actions $N+1$ times
every $\Delta t$ time units. The market maker experiences a constant absolute latency $\Delta\tau \in [0, \Delta t)$.\footnote{Technically, if $\Delta\tau>\Delta t$, say, $\Delta\tau= k \cdot \Delta t$ for an integer $k \ge 2$, then one needs to consider \rev{MDP} with delays \cite{Katsikoulos2003}, which is outside of the scope of this paper.} This latency captures the time delay between an exchange sending message to the maker, the maker processing information and deciding to trade, and the exchange receiving the actions or orders from the maker.

Specifically, starting from time zero, the exchange sends messages continuously to the market maker. The messages contain the asset price, the maker's outstanding orders (if any), as well as other market information such as the state of the orderbook. Outstanding orders refer to limit orders (sent by the maker) waiting to be filled in the limit order book. {In each period, except for the last one, the market maker can take three kinds of actions for both ask and bide sides. Taking the ask side as an example, the first one is that the maker sends an ask cancellation instruction and a new ask order. The size of the order is fixed at one unit (e.g., 100 shares). The cancellation instruction will automatically cancel the maker's any outstanding ask orders (if any). The second one is that the maker sends an ask cancellation instruction and does not send any ask orders. The third one is that the maker does not send any ask orders or order cancellation instructions.  
It is similar for the bid side. With these actions, it is readily seen that in our model the market maker has at most one ask/bid outstanding order in the order book at any time.}
For the last period, the maker unwinds all his inventory using a market order.

We define the $N+1$ action times (including the last period) as the time when the message is sent by the exchange on which the corresponding actions are made, i.e.,
	\begin{equation} \label{eq:t-i}		
	t_i:=i\cdot\Delta t,\;  \quad\text{for} \quad i=0,1,2,...,N.
	\end{equation}
Due to latency, the time when the orders or cancellation instructions of $i$-th action (if the action is not doing nothing) enter into the limit order book is \rev{$t_{i+}:=t_i+\Delta \tau$}. In particular, the
time when the unwinding market order arrives at the exchange is
	\rev{\begin{equation*}
	t_{N+}:=N \cdot \Delta t+\Delta \tau.
	\end{equation*}}
For simplicity, throughout the paper, we say an order/instruction is sent at time $t$ to mean that the order/instruction sent is based on the information at time $t$.
 We assume that the maker quotes as many times as possible. As \rev{$t_{N+}\le T$}, we have $N=\lfloor \frac{T-\Delta\tau}{\Delta t}\rfloor$, where $\lfloor x \rfloor$ is the greatest integer that is less than or equal to $x \in \mathbb{R}$.
	 \begin{figure}
	 	\centering
	 	\includegraphics[width=1\textwidth]{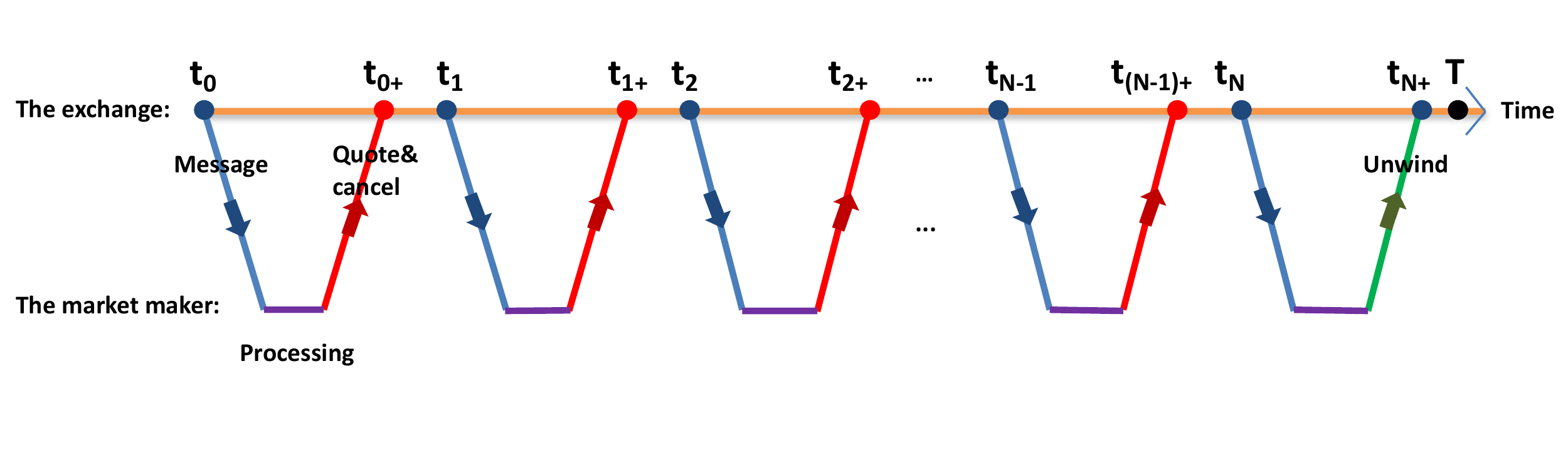}
	 	\caption { \label{fig:mm-illustration} An illustration of the trader's market making process. Here, $t_{i+}=t_i+\Delta \tau$, where $\Delta \tau$ is the latency.}
	 \end{figure}

As standard in the literature (see e.g., \cite{Gueant2013, Cartera-book}), to control the inventory risk, we assume the market maker's inventory (number of units of an asset held) is constrained by a lower bound $ \underline{q}$ and an upper bound $\overline{q}$. Here $\underline{q}$ and $\overline{q}$ are two fixed integers with $\underline{q} < -1 $ and $\overline{q}>1$.

\subsection{Order executions}\label{orderexecution}
We now describe the executions of the market maker's \rev{limit} orders in our model.
For illustrations, we consider an ask order attached with an arbitrary limit price. Considering a bid order is similar. Note that for large-tick assets which typically have long queues of limit orders at best quotes, most market orders do not walk the orderbook and limit orders are typically executed at best bid and best ask prices.
In our setting, when the market maker's ask order arrives at the exchange, it can be executed in one of the following ways:
	\begin{enumerate}
		\item If the limit price of an ask order is smaller or equal to the market best bid price, the order will get executed immediately\footnote{\rev{In some venues, market makers may not be allowed to trade with market makers, and because of this, the spreads can be inverted for short periods of time (allowing arbitrage). We do not consider such scenarios.}}. 
\vspace{2mm}
		\item Otherwise, the ask order will enter into the order book and it will get executed at its limit price if
		\begin{enumerate}
			\item either when the order sits at the best ask price and the best bid price jumps up by one tick, crossing the limit price of the ask order; \vspace{1mm}
			\item or the total time the order spends at the best ask price exceeds an exponential random variable with rate \rev{$\lambda^+$}.

		\end{enumerate}
	\end{enumerate}

Case 1 can occur, for example, when the maker sends a limit ask order and the mid-price of the asset moves up during the latency period. The execution price of the ask order is the market best bid price at the moment the ask order reaches the exchange.


In Case 2(a), the ask order is filled when the market trades through its limit price, i.e., it is among the last in the queue to get filled just as the market moves higher. In our model (asymmetric information is not explicitly modeled), this serves the role of `adverse selection' for market makers.
Case 2(a) can happen, e.g., if there is a surge of buy orders into the best ask.

In Case 2(b),
the mid-price does not move when the ask order is filled, i.e., the order leans against the queue behind it.
We interpret \rev{$\lambda^+$} as the hit rate of small or `uninformed' buy market order flow that matches with this market maker's ask order at the best ask.

 Similarly, we use \rev{$\lambda^-$} for the hit rate of small or `uninformed' sell market order flow that matches with this market maker's buy order at the best bid price. These `uninformed' buy and sell market order arrivals are assumed to be mutually independent and independent of the price process.

We remark that the rates \rev{$\lambda^{\pm}$} may depend on the rate of price change $\lambda$ and the latency $\Delta \tau$ the market maker experiences. This would not affect our mathematical results in Section~\ref{sec theoretical}. We suppress such dependence for notational simplicity.

\subsection{State space and admissible action space}

We now describe the system state and the admissible action space for the MDP model.
We write the state of available information at $t$ as $s(t)$, and the set of extended integers $\mathbb{\overline{Z}}:=\mathbb{Z}\cup{\{\pm\infty\}}.$ At any time $t\in[0,T]$, the market maker uses \rev{$s(t):=(w(t),p(t),q(t),r^+(t),r^-(t)) \in \mathbb{Z}^3 \times \mathbb{\overline{Z}}^2$} for making decisions, where $w(t)$ is the market maker's wealth, $p(t)$ is the \rev{mid-price} of the asset, $q(t)$ is the maker's inventory. In addition, \rev{$(r^+(t),r^-(t))$} represents the ask-bid outstanding quote pair. More precisely,
\rev{$p(t)+0.5+r^+(t)$} is the price of the marker's ask outstanding order at time $t$ with \rev{$r^+(t)=\infty$} meaning there is no ask outstanding order at time $t$.
\rev{Note that $p(t)+0.5$ is the best ask price of the asset at time $t$ and hence $r^+(t)$ is the relative price of the ask outstanding order comparing with the best ask price.}
Similarly, \rev{$p(t)-0.5-r^-(t)$} is the price of the marker's outstanding bid order at time $t$ where \rev{$r^-(t)=\infty$} means there is no bid order at time $t$.
 For the tractability of the MDP model, we do not include the order book status, price signals, and other market information in $s(t)$.

From the Poisson assumptions, it is easy to see that the sample paths of $s(t)$ are right-continuous with left limits. For $i=0,1,2,...,N$, denote
\rev{\begin{equation}
s_i=(w_i,p_i,q_i,r^+_i,r^-_i):=s(t_i-), \quad \text{and} \quad
s_{i+}=(w_{i+},p_{i+},q_{i+},r^+_{i+},r^-_{i+}):=s(t_{i+}-).
\end{equation}}
The $N+1$ states $s_i, i=0,1,...,N$ relate to the maker's $N+1$ actions (quote, cancellation or unwinding) and \rev{$s_{N+}$} is the final state that is just before the time to end market making. These $N+2$ states $s_0, s_1, ..., s_N, \rev{s_{N+}}$ are the system states for the discrete-time MDP and $\rev{s_{i+}}, i=1,2,...,N-1$, are intermediates to compute dynamics of these system states.\footnote{We use the left limits of the underlying continuous-time state process for the discrete-time state, which is a convention in the continuous-time stochastic control literature.} We also call the $N+1$ times $t_i$, $i=0,1,..,N$, the decision epochs in the discrete-time MDP (though at $t_N$, the market maker simply liquidates the position using a market order).

We can now write down the state space.
Clearly we have \rev{$r^\pm(t)\ge 0$}. Note that when the maker's inventory reaches the lower/upper bound, he should not have any ask/bid outstanding orders at each decision epoch, since otherwise the inventory may exceed the two bounds due to possible execution of those outstanding orders. Hence, the state space $S$  is given as follows.
	\rev{\begin{equation*}\label{S}
	\begin{split}
	S := \{(w, p, q, r^+, r^-) : \quad &(w, p, q) \in\mathbb{Z}^3,  \; (r^+, r^-) \in \mathbb{\overline{Z}}\times\mathbb{\overline{Z}}, \;  \underline{q}\leq q\leq\overline{q},  \; \rev{r^\pm\ge 0}, \\
	&	\text{if} \; q =\underline{q},  \; \text{then} \;r^+= \infty,  \; \text{and if}\;q = \overline{q},   \; \text{then} \;  \rev{r^- = \infty}\}.\\
	\end{split}
	\end{equation*}}

Next, we describe the admissible action space. The maker is allowed to send market orders, limit orders or not send any orders. For each period, when the market maker receives the system state $s=(w,p,q,\rev{r^+,r^-})$, {we use a pair \rev{$(\delta^+,\delta^-)$} representing the action of the maker, where $\rev{\delta^+,\delta^-}\in\mathbb{\overline{Z}}\cup\{o\}$. \rev{$\delta^+\in\mathbb{Z}$} means the maker quotes an ask order at price \rev{$p+0.5+\delta^+$}, i.e., \rev{$\delta^+$} is the relative price comparing with \rev{the best ask price},  together with an instruction to cancel his previous ask outstanding order.
\rev{$\delta^+=\infty$} means the maker cancels his ask outstanding order but does not send any new ask orders. \rev{$\delta^+=-\infty$} means the maker sends a sell market order with an ask cancellation instruction. \rev{$\delta^+=o$} means the maker does nothing for the ask side. The difference between \rev{$\delta^+=\infty$ and $\delta^+=o$} is whether or not the maker sends a cancellation instruction for the ask outstanding order.} It is similar for the bid side. Specifically, \rev{$\delta^-\in\mathbb{Z}$ means the maker quotes an bid order at price \rev{$p-0.5-\delta^-$}}. \rev{$\delta^-=\infty$} means no buy order but a bid cancellation instruction is sent and \rev{$\delta^-=-\infty$} means a buy market order with a bid cancellation. For convenience, {we also call \rev{$\delta^\pm=\pm\infty$} relative prices and use the notation $\mathbb{\overline{Z}}^o:=\mathbb{\overline{Z}}\cup\{o\}$ for the set of possible decisions.}

We use $A_s$ to denote the set of admissible actions \rev{$(\delta^+, \delta^-)$} for state $s$, such that the inventory of the market maker always stays in the interval $[\underline{q}, \overline{q}]$ at any time. For the mathematical expression of $A_s$, see Section~\ref{sec:As}. To set up the MDP, we also need to describe the dynamics for the system state. We refer the readers to Section~\ref{sec:dynamics} for details.

\subsection{Optimization problem for the market maker}
In this section, we formulate the optimization problem for the market maker. The maker quotes bid and ask orders at each period, and aims to maximize his expected terminal wealth after he unwinds the position at the end of the trading horizon. Costs of trading such as IT and compliance are not considered.

We first give the expression for the market maker's terminal wealth, denoted as \rev{$\mathcal{W}$}. Suppose just before \rev{$t_{N+}$}, the state \rev{$s_{N+}=s(t_{N+}-)$ is $(w_{N+},p_{N+},q_{N+},\rev{r^+_{N+},r^-_{N+}})$}. Then it is easy to see that
\rev{\begin{equation}\label{TW}
\begin{split}
\mathcal{W}&:= w_{N+}+p_{N+}q_{N+}-0.5\left|q_{N+}\right|.
\end{split}
\end{equation}}
That is, if the market maker has positive inventory \rev{$q_{N+}>0$,} then the maker unwinds the position by sending a market sell order and the execution price is the best bid price \rev{$p_{N+}-0.5$}. Similarly, if \rev{$q_{N+}<0$}, then the market maker sends a market buy order which will be filled at the best ask price \rev{$p_{N+} + 0.5$}. We do not consider the price impact of such a clean-up trade. This is reasonable as long as the market maker's inventory bounds do not exceed the market depth of best quotes in the order book. For large-tick assets, it is typical that there are large volumes of limit orders sitting at best quotes.

Now we can formulate the optimization problem of the market maker as follows:
\rev{\begin{equation}\label{v_0}
	\begin{split}
		v_0(s)=v_0(w,p,q,r^+,r^-) :=\sup\limits_{\pi} E^{\pi}[\mathcal{W}\mid s_0=(w,p,q,r^+,r^-)],\\
	\end{split}
\end{equation}}
where the supremum is taken over all Markovian admissible policies. Specifically, we have each Markov policy $\pi=(f_0,f_1,...,f_N)$, where $f_i(\cdot)\; \text{is a mapping from} \;S \;\text{to} \;\mathbb{\overline{Z}}^o \times\mathbb{\overline{Z}}^o  \;  \text{such that for all}  \; s\in S, \; f_i(s) \in A_s$, the admissible action space.
This function $v_0$ is called the value function starting from the $0$-th period.  We can also define the value function starting from $i$-th period, $i=1,2,...,N,\rev{N+}$, as follows:
\rev{\begin{equation}\label{v_i}
	\begin{split}
		&v_i(w,p,q,r^+,r^-) := \sup\limits_{\pi} E^{\pi}[\mathcal{W}\mid s_i=(w,p,q,r^+,r^-)],\; i=1,2,...,N,\\
		&v_{N+}(w,p,q,r^+,r^-):= w+pq-0.5\left|q\right|.
	\end{split}
\end{equation}}

Mathematically, in Equations (\ref{v_0}) and (\ref{v_i}), the existence of expectations is not trivial since the \rev{$\mathcal{W}$} is not bounded. To address this issue, there is a standard method using an integrable bounding function to bound value functions. For our MDP, one can use the bounding function $C(\left|w\right|+\left|p\right|+1)$ which can be verified to be integrable, where $C$ is a constant that is independent of the state $s$. For simplicity, we omit the proof and refer readers to \cite{Bauerle2011} for this method.

\subsection{The Bellman equation}
\noindent
As we have formulated the market making problem as MDP, standard arguments yield the following Bellman equation for the value functions:
\rev{\begin{equation}\label{Bellman}
v_i(s)=\left\{
\begin{aligned}
&w+pq-0.5\left|q\right|, &\quad i=N+,\\
&E[v_{N+}(s_{N+})\mid s_N=s], &\quad i=N,\\
&\sup\limits_{(\delta^+,\delta^-)\in{A_s}} E^{(\delta^+,\delta^-)}[v_{i+1}(s_{{N+1}})\mid s_i=s],&\quad i=0,1,...N-1,
\end{aligned}
\right.
\end{equation}}
where the superscript \rev{$(\delta^+,\delta^-)$} in the expectation notation means that in the $i$-th action, the ask action is \rev{$\delta^+$}, and the bid action is \rev{$\delta^-$}.
One can readily prove using the theory of upper semi-continuous MDP that the supremum operators in Equation \eqref{Bellman} can be attained, which generates an optimal policy for the MDP. As the argument is standard (see, e.g., \cite{Bauerle2011}), we omit the proof.



\subsection{Model limitations}
Before we present the main results, we briefly
discuss the main limitations of our model.
First, market makers in practice make decisions based on microstructure information such as the state of the order book and the queue positions of their orders when trading large-tick assets. We work with a stylized MDP model to include latency in market making, and
do not include these features for tractability purposes. Through our simplified model, we can find that the key ingredient in market-making is the value (expected profit) of an order. In practice, a market maker can use the microstructure signals and latency information to predict the order value and combine with our results to decide whether it is profitable to trade a particular large--tick asset. In addition, in our model,
the market maker is risk neutral in the sense that he maximizes the expected terminal wealth. If the market maker's expected profit is risk-adjusted, then the profitability condition we provide in this paper becomes a necessary condition for such a market maker to earn a positive risk-adjusted profit. Finally, we do not directly model the competition of market makers. There can be other marker makers with similar strategies and information in addition to the agent we consider, and lower latency can allow one to obtain a better position in order book queues.
Our model captures the relative speed advantage of the agent in a parsimonious manner by directly specifying the rate at which the `uninformed' market order flows hit the agent's limit orders, and this rate may decrease with the latency that the agent experiences (with other things fixed).


\section{Value of an order} \label{sec:value-order}
In this section, we introduce and define the value of an order (also called `order value')
that plays a critical role in our analysis of the optimal market making problem.

The value (or the expected profit) of an order measures the difference of its execution price with the `fundamental value' of the asset at the time of order execution. For example, if 
one uses the asset mid-price at the time of order execution as the fundamental value, then the value of an one-unit market order is $-0.5$ ticks for a large-tick asset. That is, a trader pays half of the bid-ask spread using a market order.

To define the value of an order when there is latency, we note that the market price might have moved between the moment an order sent by the market maker and the confirmed placement or execution of an order. \gao{With this observation, we now define the value of an order mathematically. Suppose
based on the information at time zero, an order with relative price $x\in \mathbb{Z}$ is sent to the exchange. For an ask order, this means the order is
quoted at the price $p(0)+0.5+x$; For a bid order, this means the order is quoted at the price $p(0)- 0.5 - x$. }
This order experiences a time delay $t_1' \ge 0$ (this is the latency for the new order sent by the market maker)
 before its placement is confirmed by the exchange. We then compare the execution price of this order with the mid-price at time $t_1' + t_2'$ (for some time $t_2' \ge 0$) which is regarded as the fundamental value of the asset. If the order is not executed by the time $t_1' + t_2'$, then the value of this order is zero. As we will see later, in our model, $t_1'$ can be zero or $\Delta \tau$ and $t_1' + t_2'$ can be  $\Delta \tau$ or $\Delta t$ depending on the order that we consider (e.g. an outstanding old order or a new order sent by the maker) and when the order is canceled.

\rev{Mathematically, for any $t_1',t_2'\geq 0$, the value of an order with relative price $x$ is defined by
\begin{equation}\label{H(xy)}
H^\pm(t_1',t_2',x)=E\left[ \left(\max\{0.5+x\mp\Delta p[0,t_1'],-0.5\}\mp \Delta p[t_1',t_1'+t_2'] \right) \cdot \mathds{1}_{fill^\pm_{t_1',t_2',x}}\right],
\end{equation}
where $H^+$ denotes the value of an ask order, $H^-$ denotes the value of a bid order, $\Delta p[\cdot, \cdot]$ denotes the change of mid-price on the corresponding time window, and the indicator functions $\mathds{1}_{fill^\pm_{t_1',t_2',x}}$ specify whether the ask/bid order with relative price $x$ is filled before time $t_1'+t_2'$.}

\rev{We briefly explain \eqref{H(xy)} for the ask side as the bid side is similar. The execution price of this ask order is its limit price $p(0)+0.5+x$, or the market best bid price at the time $t_1'$, i.e., $p(0)+\Delta p[0,t'_1]-0.5$, depending on whether the market best bid exceeds the limit price of the ask order when the ask order reaches the exchange. In addition, the mid-price at time $t_1' + t_2'$ is given by $p(0) + \Delta p[0,t_1'] +  \Delta p[t_1',t_1'+t_2'] $. The difference between the execution price of the ask order and the mid price at time $t_1' + t_2'$ yields the expression of order value $H^+$ in \eqref{H(xy)}. Note that the above definition of order value $H^\pm(t_1',t_2',x)$ also applies to $x=\pm\infty$. For $x=\infty$, the indicator functions in \eqref{H(xy)} is zero since such an order will not be filled, and $H^\pm(t_1',t_2',\infty)\equiv0$. Similarly, $x=-\infty$ means the order is a market order, and we have $H^\pm(t_1',t_2',-\infty)\equiv-0.5$ by the martingale property of the mid-price.}

\section{Main results}\label{sec theoretical}
\noindent
In this section, we present the main results. The first result is on the value of an order and it is given in Section~\ref{sec:ordervalue}. The second result characterizes the structure of the value functions of the MDP model and it is provided in Section~\ref{sec:structure-vf}. Finally, the third result is on when the market making strategy is profitable and it is given in Section~\ref{sec:profit-mm}.

\subsection{Characterization of the value of an order} \label{sec:ordervalue}
We now state the first main result of this paper. It gives an explicit expression of the value of an order with delay $t_1'=0$. It also connects the value of an order experiencing positive delay with that of zero delay. Finally, it provides simple conditions to decide when the order value is positive. Recall $H^{\pm}$ and the indicator functions $\mathds{1}_{fill^\pm_{0,t_2',x}}$ defined in \eqref{H(xy)}. With slight abuse of notations, we use $\delta^\pm$ to denote the relative prices of ask/bid orders.

\begin{theorem}\label{value of orders tau}
	For any $t_1'\geq0$, $t_2'>0$ and $(\delta^+,\delta^-)\in\mathbb{\overline{Z}}^2$, we have
\begin{itemize}
\item [(a)]
	\rev{\begin{equation} \label{eq:value1}
	H^\pm(0, t_2',\delta^\pm)=
	\left\{
	\begin{split}
	&\left(\frac{\lambda^\pm}{\lambda^\pm+\lambda/2}-0.5 \right) \cdot E \left[\mathds{1}_{fill^\pm_{0,t_2',\delta^\pm}} \right],\quad  &\delta^\pm\ge 0,\\
	&-0.5,\quad &\delta^\pm< 0,
	\end{split}
	\right.
	\end{equation}}
\item [(b)]
	\rev{\begin{equation}
	H^\pm(t_1',t_2',\delta^\pm)=E[H^{\pm}(0, t_2',\delta^\pm \mp \Delta p[0,t_1'])]
	\end{equation}}

\item [(c)] \rev{$H^{\pm}(t_1',t_2',\delta^\pm)\leq 0$ for any $\delta^\pm\in\mathbb{\overline{Z}}$ if and only if
$\lambda^\pm\leq \lambda/2$.}
\end{itemize}
\end{theorem}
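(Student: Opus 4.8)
\emph{Overview.} I would prove the three parts in order, since (b) and (c) both rest on the closed form in (a); throughout I treat the ask side, the bid side being identical after reflecting the price. \emph{Part (a).} For $\delta^+<0$ the order crosses the spread on arrival (Case~1), executes immediately at the best bid $p(0)-0.5$, and because the mid-price is a martingale with independent increments, $E[\Delta p[0,t_2']]=0$, so \eqref{H(xy)} collapses to $H^+(0,t_2',\delta^+)=-0.5$. For $\delta^+\ge 0$ the execution price is the limit price $p(0)+0.5+\delta^+$, so the bracket in \eqref{H(xy)} is $0.5+\delta^+-\Delta p[0,t_2']$. The first move is a martingale reduction: writing $\tau$ for the fill time, on $\{fill\}$ one has $\tau\le t_2'$, and since the price has independent mean-zero increments and $\tau$ is a stopping time for the joint filtration of the price and the independent uninformed-order stream, optional stopping yields $E[\Delta p[0,t_2']\,\mathds{1}_{fill}]=E[\Delta p[0,\tau]\,\mathds{1}_{fill}]$, so only the move up to the fill survives. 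I then classify the fill: at $\tau$ the order sits at the best ask (relative price $0$), forcing $\Delta p[0,\tau]=\delta^+$, so a Case~2(b) fill has value $0.5+\delta^+-\delta^+=0.5$, while a Case~2(a) fill is caused by the accompanying up-jump, giving $\Delta p[0,\tau]=\delta^++1$ and value $-0.5$. Hence $H^+(0,t_2',\delta^+)=0.5\,P(\text{2(b) before }t_2')-0.5\,P(\text{2(a) before }t_2')$.

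The crux of (a) is that, conditional on filling, the fill is of type 2(b) with probability $\lambda^+/(\lambda^++\lambda/2)$ for \emph{every} $t_2'$. I would establish this through the memorylessness of the competing exponential clocks at the best ask: while the order is at relative price $0$ the next event is an uninformed fill (rate $\lambda^+$, type 2(b)), an adverse up-jump (rate $\lambda/2$, type 2(a)), or a down-jump (rate $\lambda/2$) pushing the order to relative price $1$; the symmetric random walk of the relative price returns to $0$ almost surely, so the best ask is revisited, and on each completed visit the exit is 2(b) versus 2(a) in the fixed ratio $\lambda^+:\lambda/2$. Because every visit cut off by the horizon carries the same factor for both fill types, this ratio is preserved after truncation at $t_2'$. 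Substituting $P(\text{2(b)})=\frac{\lambda^+}{\lambda^++\lambda/2}E[\mathds{1}_{fill}]$ and $P(\text{2(a)})=\frac{\lambda/2}{\lambda^++\lambda/2}E[\mathds{1}_{fill}]$ gives \eqref{eq:value1}. \emph{Part (b)} is then a one-line conditioning argument: conditioning on $\mathcal{F}_{t_1'}$, the order reaches the book at time $t_1'$ with relative price $\delta^+-\Delta p[0,t_1']$, and from that instant its evolution over a window of length $t_2'$ is, by stationarity and independence of the increments of the price and of the uninformed-order stream, a verbatim copy of a zero-delay order at that relative price; the conditional expectation equals $H^+(0,t_2',\delta^+-\Delta p[0,t_1'])$, and taking expectations gives the identity, the $\mp$ recording that an up-move lowers the ask's relative price but raises the bid's.

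\emph{Part (c).} For the easy direction, if $\lambda^\pm\le\lambda/2$ then $\frac{\lambda^\pm}{\lambda^\pm+\lambda/2}-0.5\le 0$, so by (a) $H^\pm(0,t_2',\cdot)\le 0$ at every relative price (including the marketable and $\pm\infty$ cases), and by (b) $H^\pm(t_1',t_2',\delta^\pm)$ is an average of nonpositive numbers, hence $\le 0$. For the converse I argue by contraposition: assume $\lambda^\pm>\lambda/2$, so $c:=\frac{\lambda^\pm}{\lambda^\pm+\lambda/2}-0.5>0$, and I exhibit a profitable quote. When $t_1'=0$, the choice $\delta^\pm=0$ already gives $H^\pm(0,t_2',0)=c\,E[\mathds{1}_{fill^\pm_{0,t_2',0}}]>0$, since a best-quote order fills with positive probability over any $t_2'>0$. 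When $t_1'>0$ I take $\delta^\pm=n$ large and keep only the single term $\Delta p[0,t_1']=\pm n$ (so the order arrives exactly at the best quote) in the expansion from (b), obtaining $H^\pm(t_1',t_2',n)\ge c\,g_0\,P(\Delta p[0,t_1']=n)-0.5\,P(\Delta p[0,t_1']>n)$, where $g_0:=E[\mathds{1}_{fill^\pm_{0,t_2',0}}]>0$ is independent of $n$ and I used that $\Delta p[0,t_1']$ is symmetric.

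The main obstacle is this final inequality, and the decisive estimate I expect to need is $P(\Delta p[0,t_1']>n)=o\big(P(\Delta p[0,t_1']=n)\big)$ as $n\to\infty$: writing $\Delta p[0,t_1']=U-D$ with $U,D$ independent $\mathrm{Poisson}(\lambda t_1'/2)$, the point masses decay factorially, so $P(\Delta p=n+1)/P(\Delta p=n)\to 0$ and the upper tail is asymptotically negligible beside the point mass at $n$. Consequently $c\,g_0\,P(\Delta p=n)$ dominates for all large $n$, yielding $H^\pm(t_1',t_2',n)>0$ and finishing the contrapositive. This comparison of a Poisson-type point mass against its own tail—exploiting the light tails of the compound Poisson price together with the fact that the best-quote fill probability $g_0$ is a fixed positive constant—is the only genuinely delicate step; note that bounding the positive part by the $\Delta p=0$ term instead would fail once $t_1'\ge t_2'$, so anchoring on the $\Delta p=\pm n$ term is essential.
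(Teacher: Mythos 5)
Your proposal is correct and follows essentially the same route as the paper's proof: the same martingale/optional-stopping reduction and competing-rates argument yielding the conditional fill value $\frac{\lambda^\pm}{\lambda^\pm+\lambda/2}$ in (a), the same stationary-independent-increments conditioning in (b), and in (c) the same anchoring of $H^{\pm}(t_1',t_2',\delta^\pm)$ on the event $\Delta p[0,t_1']=\delta^\pm$ combined with the estimate that the tail $\mathbb{P}(\Delta p[0,t_1']>\delta^\pm)$ is negligible relative to the point mass $\mathbb{P}(\Delta p[0,t_1']=\delta^\pm)$. The only differences are at the level of verification of these shared steps: the paper formalizes the fill-type ratio via an auxiliary Markov chain $UAsk(t)$ and its embedded jump chain where you use memoryless competing clocks over successive visits to the best ask, and it proves the consecutive-ratio decay $\mathbb{P}(\Delta p=n)/\mathbb{P}(\Delta p=n-1)\le \lambda t_1'/(2n)$ by conditioning on the Poisson jump count (with parity bookkeeping) where you invoke the Skellam representation of the increment as a difference of two independent Poisson variables---both give the same bound and complete the argument identically.
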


Part (a) of this result states that the value of an order with zero delay is linear in its fill probability. We briefly discuss the economic interpretations of the coefficients.
For illustrations, we take \rev{$H^{+}(0,t_2',\delta^+)$} as an example. For \rev{$\delta^+< 0$}, the ask order is effectively a market order, and will be filled instantly at the current best bid price as there is no latency. So its execution price is 0.5 ticks less than the mid-price at the time of execution. As the mid-price is a martingale with independent increments, it follows that the expected profit or the value of such an order is $-0.5$ ticks. On the other hand,
for \rev{$\delta^+ \ge 0$}, the limit sell order enters into the order book, and by \eqref{eq:value1} its value equals to a constant \rev{$\frac{\lambda^+}{\lambda^++\lambda/2}-0.5$} multiplied by the fill probability of the ask order \rev{in $[0, t_2']$, i.e.,  $E[\mathds{1}_{fill^+_{0,t_2',\delta^+}}]$}. This constant \rev{$\frac{\lambda^+}{\lambda^++\lambda/2}-0.5$} represents the conditional expected profit of the ask order given that the order is filled. To see this, we note that when such an ask order is filled, there are two scenarios: first, the ask order sits at the best ask price, and eventually transacts with an `uninformed' buy order, gaining 0.5 tick as the mid-price does not move at the time of execution (`spread capture'); second, the \rev{mid-price} jumps up and crosses the quoted price of the ask order, in which case, the ask order loses 0.5 tick as the mid-price immediately moves up one tick at the time of execution of the order (`adverse selection'). The rate of the first scenario occurs is \rev{$\lambda^+$}, while the rate of the second scenario occurs is $\lambda/2$. Hence, the expected profit the limit sell order conditioned on execution is given by
\rev{\begin{equation*}
\frac{\lambda^+}{\lambda^++\lambda/2} \cdot 0.5+\frac{\lambda}{\lambda^++\lambda/2} \cdot (-0.5)=\frac{\lambda^+}{\lambda^++\lambda/2}-0.5.
\end{equation*}}
While our model does not feature information asymmetry, the result here is
 consistent with the existing studies (see e.g., \cite{sandas2001, Moallemi2}) where they showed that one can interpret the order value as follows:
\[\text{Value of an order} =\text{(spread capture $-$ adverse selection cost)} \times \text{fill probability}.\]

We next discuss Part (b). It suggests that the value of orders with latency $t_1'$ is the expected value of orders with zero latency where the quotes \rev{$(\delta^+, \delta^-)$} are perturbed by random fluctuations of the market price during the latency window. So if, for example, quoting at the best ask price \rev{$\delta^+=0$} is optimal (in maximizing the order value) in the case of zero delay $t_1'=0$, then such a quote may be not optimal anymore in the case of a delay $t_1'>0$, and the order value will be decreased.

Part (c) says if \rev{$\lambda^+\le\lambda/2$}, then there are no ask orders with positive values and if \rev{$\lambda^+>\lambda/2$}, there is at least one ask limit order whose value is positive. It is similar for the bid side. This result will be useful in characterizing the profitability of market making strategies discussed in Section~\ref{sec:profit-mm}. Note that for $t_1'=0$, the result in Part (c) clearly holds in view of Part (a). For general $t_1' \ge 0$, one needs to combine Parts (a) and (b), and properties of the random walk and Poisson processes to establish the result. See Appendix~\ref{sec:proofOrderValue} for details of the proof.

\subsection{Structure of value functions and the role of order value} \label{sec:structure-vf}
We now present the second main result of the paper. It characterizes the structure of value functions in \eqref{Bellman} and highlights the role of value of orders in optimal market making.

To facilitate and simplify the presentation, let us use the indicator functions
\rev{$\mathds{1}_{fill^+_0}$ and $\mathds{1}_{fill^-_0}$} to denote whether the ask and bid outstanding  orders (observed at time 0) are filled in $[0, \Delta \tau$). We also define
\rev{\begin{equation}\label{Hask act}
	H^{\pm}_{act}(r^\pm, \delta^\pm):=
	\left\{
	\begin{aligned}
	&H^{\pm}(0,\Delta \tau,r^\pm)+H^{\pm}(\Delta\tau,\Delta t-\Delta\tau,\delta^\pm), &\delta^\pm\in\mathbb{\overline{Z}},\\
	&H^{\pm}(0,\Delta t,r^\pm), &\delta^\pm=o,\\
	\end{aligned}
	\right.
	\end{equation}}
Then we can obtain the following result. It shows
 that the value of the market maker's bid and ask orders in one period, that is \rev{$H^{+}_{act}(r^+, \delta^+)+H^{-}_{act}(r^-, \delta^-)$} as we will explain later, essentially plays the role of one-period reward in our MDP model.

\rev{\begin{theorem}\label{BellmanThm}
	For any $s=(w,p,q,r^+,r^-)\in S$, we have $v_{N+}(s)=w+pq-0.5|q|$ and
	\begin{equation}\label{Bellman v_i}
	v_i(s)=w+pq+g_i(q,r^+,r^-), \quad i=0,1,2,...,N,
	\end{equation}
	where
	\begin{equation}\label{Bellman g_i}
	g_i(q,r^+,r^-):=
	\left\{
	\begin{aligned}
	&\begin{split}
	&H^{+}(0,\Delta\tau,r^+)+H^{-}(0,\Delta\tau,r^-)\\
	& \qquad-0.5E \left[\;|q-\mathds{1}_{fill^+_0}+\mathds{1}_{fill^-_0}|\; \big| (r^+_0,r^-_0)=(r^+,r^-) \right],
	\end{split} & i=N,\\
	&\max\limits_{(\delta^+,\delta^-)\in A_s} G_i(q,r^+,r^-,\delta^+,\delta^-), & i=0,1,...,N-1,
	\end{aligned}
	\right.
	\end{equation}
and
	\begin{equation}\label{H_i 5}
	\begin{split}
	&G_i(q,r^+,r^-,\delta^+,\delta^-):=E \left[g_{i+1}(q_1,r^+_1,r^-_1) \mid (q_0,r^+_0,r^-_0)=(q,r^+,r^-),(\delta^+_0,\delta^-_0)=(\delta^+,\delta^-)\right]\\
& \qquad \qquad \qquad \qquad  \qquad  + H^{+}_{act}(r^+, \delta^+)+H^{-}_{act}(r^-, \delta^-).
	\end{split}
	\end{equation}
\end{theorem}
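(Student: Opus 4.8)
The plan is to prove the additive decomposition $v_i(s)=w+pq+g_i(q,r^+,r^-)$ by backward induction on $i$, running from the terminal index $N+$ down to $0$. The index $i=N+$ is just the definition $v_{N+}(s)=w+pq-0.5|q|$, so the base of the induction is immediate. The engine of the whole argument is a single one-period identity that I would isolate as a lemma: writing $M_t:=w_t+p_tq_t$ for the mark-to-market value of the maker's book, I claim that for any admissible action $(\delta^+,\delta^-)$ taken at a non-terminal epoch $t_i$,
\begin{equation*}
E\!\left[M_{t_{i+1}-}\mid s_i=s,\ (\delta^+_0,\delta^-_0)=(\delta^+,\delta^-)\right]=w+pq+H^{+}_{act}(r^+,\delta^+)+H^{-}_{act}(r^-,\delta^-).
\end{equation*}
In words, the expected change in $M$ over one period is exactly the sum of the ask- and bid-order values $H^\pm_{act}$ in \eqref{Hask act}.

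To establish this identity I would split the period $[t_i,t_{i+1})$ at the moment $t_{i+}$ when the action takes effect and compute $E[M_{t_{i+}-}-M_{t_i-}]$ and $E[M_{t_{i+1}-}-M_{t_{i+}-}]$ separately. On each sub-window $[a,b)$, an elementary expansion gives
\begin{equation*}
M_b-M_a=(p_b-p_a)\,q_a+(P^+_{exec}-p_b)\,\mathds{1}_{fill^+}+(p_b-P^-_{exec})\,\mathds{1}_{fill^-},
\end{equation*}
where $P^\pm_{exec}$ are the execution prices and $p_b$ is the mid-price at the end of the window. Taking expectations, the pure inventory-drift term $(p_b-p_a)q_a$ vanishes because the compound Poisson mid-price is a martingale with independent increments and $q_a$ is known at time $a$; and the two fill terms are, by the very definition \eqref{H(xy)}, the order values $H^+(0,b-a,\cdot)$ and $H^-(0,b-a,\cdot)$, since the relative execution price $\max\{0.5+x,-0.5\}$ minus the end-of-window drift is exactly the integrand there. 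Applying this to the first sub-window (where the outstanding orders $r^\pm$ are live) yields $H^\pm(0,\Delta\tau,r^\pm)$; applying it to the second sub-window (where the newly submitted quotes are live) yields, after conditioning on $\mathcal{F}_{t_{i+}}$ and averaging over the price move during the latency window, $E[H^\pm(0,\Delta t-\Delta\tau,\delta^\pm\mp\Delta p[0,\Delta\tau])]=H^\pm(\Delta\tau,\Delta t-\Delta\tau,\delta^\pm)$, which is precisely where Part (b) of Theorem~\ref{value of orders tau} enters. Summing the two sub-windows reproduces $H^\pm_{act}(r^\pm,\delta^\pm)$ when $\delta^\pm\in\mathbb{\overline{Z}}$; for the do-nothing action $\delta^\pm=o$ the outstanding order stays live across both sub-windows, and the same computation—using the martingale property once more to reconcile the two end-of-window reference prices—collapses to the single term $H^\pm(0,\Delta t,r^\pm)$, matching the second branch of \eqref{Hask act}.

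With the lemma in hand the induction closes quickly. For $i=N$ the Bellman equation reads $v_N(s)=E[v_{N+}(s_{N+})]=E[M_{t_{N+}-}]-0.5\,E[|q_{N+}|]$; the lemma (specialized to the terminal convention that no new orders are sent, so only the $H^\pm(0,\Delta\tau,r^\pm)$ pieces survive) gives the first two terms of $g_N$, while $q_{N+}=q-\mathds{1}_{fill^+_0}+\mathds{1}_{fill^-_0}$ produces the inventory-penalty term, yielding \eqref{Bellman g_i} for $i=N$. For $i<N$ I would substitute the induction hypothesis $v_{i+1}(s_{i+1})=M_{t_{i+1}-}+g_{i+1}(q_{i+1},r^+_{i+1},r^-_{i+1})$ into \eqref{Bellman}. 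The lemma handles the $E[M_{t_{i+1}-}]$ part, and since $g_{i+1}$ depends only on $(q,r^+,r^-)$—and the transition of this triple under a fixed action is driven solely by the fills and price moves, which are integrated out—the remaining expectation $E[g_{i+1}(q_{i+1},r^+_{i+1},r^-_{i+1})\mid(q,r^+,r^-),(\delta^+,\delta^-)]$ depends only on $(q,r^+,r^-,\delta^+,\delta^-)$. Adding the two pieces reproduces $G_i$ in \eqref{H_i 5}, and because the prefactor $w+pq$ is independent of the action it factors out of the supremum, so $v_i(s)=w+pq+\max_{(\delta^+,\delta^-)\in A_s}G_i=w+pq+g_i(q,r^+,r^-)$; the maximum is attained by the upper-semicontinuous MDP theory quoted after \eqref{Bellman}.

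I expect the main obstacle to be the careful bookkeeping inside the one-period lemma rather than any conceptual difficulty: one must track the execution prices in the two sub-windows against the correct end-of-window mid-prices, verify that all inventory-drift contributions (including the cross term created when an order fills in the first sub-window and the residual inventory then drifts through the second) vanish in expectation, and confirm that the latency-induced displacement of the new quotes is exactly the random perturbation $\delta^\pm\mp\Delta p[0,\Delta\tau]$ appearing in Part (b) of Theorem~\ref{value of orders tau}. Handling the coupling between the two sides through the shared inventory—so that the expected mark-to-market increment still splits additively into an ask term and a bid term—is the one place where the martingale property must be invoked with some care.
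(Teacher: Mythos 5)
Your proposal is correct and takes essentially the same route as the paper: backward induction through the Bellman equation, where the one-period mark-to-market identity you isolate as a lemma is exactly the paper's reorganization of the wealth/inventory dynamics into $H^{\pm}(0,\Delta\tau,r^{\pm})$ plus the new-order terms (its equation \eqref{value decom}), with the same martingale cancellations of the inventory-drift terms and the same separate treatment of the do-nothing action $\delta^{\pm}=o$. The only cosmetic difference is that you obtain the new-order contribution by conditioning on $\Delta p[0,\Delta\tau]$ and citing Part (b) of Theorem~\ref{value of orders tau}, whereas the paper matches that expectation directly to the definition of $H^{\pm}(\Delta\tau,\Delta t-\Delta\tau,\delta^{\pm})$ in \eqref{H(xy)}; the two are interchangeable since Part (b) is proved by precisely that conditioning.
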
}


Theorem \ref{BellmanThm} reduces the computation of value functions from five state-variables to three state-variables \rev{$(q, r^+, r^-)$} in the backward recursion \eqref{Bellman g_i}.
As suggested by \eqref{Bellman v_i}, the value function \rev{$v_i(w,p,q,r^+,r^-)$} can be decomposed into three parts: (1) $w$ represents the market maker's current wealth or cash; (2) \rev{$pq$} is the value of the inventory marked to the market at the mid-price; (3) \rev{$g_i(q,r^+,r^-)$} represents the extra value from following the optimal strategy, and this extra value depends on the market maker's outstanding orders as well as the inventory. In particular, there is an inventory penalty $-0.5|q|$ at the terminal time \rev{$t_{N+}$} which arises as the market maker needs to unwind his positions by using a market order and crossing the bid-ask spread.


The backward recursion and maximization problem in \eqref{Bellman g_i} and \eqref{H_i 5} specify the trade-off between the value of orders (outstanding orders and new orders sent in the action) in the current period and the expected extra value $g_{i+1}$ at the next period. Specifically,  the term \rev{$H^{+}_{act}(r^+, \delta^+)+H^{-}_{act}(r^-, \delta^-)$} in \eqref{H_i 5} is the value of outstanding orders and new orders in the actions in the current period. To see this, note if \rev{$\delta^+,\delta^-\in \mathbb{\overline{Z}}$}, then the lifetime of the outstanding orders in the current period is $\Delta\tau$; otherwise (i.e., doing nothing) the outstanding order stays in the order book for $\Delta t$ in one period. Similarly, the new orders sent in the $i$-th action, if entered into the order book, stays for $\Delta t-\Delta\tau$ in the current period.
This explains \eqref{Hask act} and hence the interpretation for \rev{$H^{+}_{act}(r^+, \delta^+)+H^{-}_{act}(r^-, \delta^-)$}.

We also explain the similarities and differences between the structure of the value functions here and that in the existing literature on optimal market making with zero latency (under different models). When there is zero latency, i.e., $\Delta \tau =0$, one can readily show that the value function does not depend on the relative prices of outstanding orders \rev{$(r^+,r^-)$}. In this case, the decomposition structure of value functions in \eqref{Bellman v_i} is similar as in \cite{Cartea}. Due to the presence of latency in our model, we can observe two main differences between our result and those in the literature (see e.g., \cite{Cartea}). First, our extra value $g_i$ depends on the outstanding orders.
Second, due to the existence of latency, these outstanding orders may be executed before the market maker can cancel them, which will affect the future inventory and actions of the market maker.

\subsection{Profitability of market making strategies} \label{sec:profit-mm}

We present the third main result of the paper. It allows us to better understand the profitability of the market making strategies with $\Delta\tau\geq0$. To be specific, we consider \rev{the expected profit $\mathcal{P}$} of the market maker, defined as
\rev{\begin{equation}\label{NP 1}
\mathcal{P}:=v_0(w,p,0,\infty,\infty)-w.
\end{equation}}
That is, \rev{$\mathcal{P}$} is the expected net wealth change over the horizon $[0, T]$, where the market maker starts with cash $w$, zero inventory ($q=0$) and no outstanding orders \rev{($r^+=\infty, r^- =  \infty$)} in the limit order book at time zero. Our result given below shows when \rev{$\mathcal{P}$} is positive.

\begin{theorem}\label{net profit} Fix latency $\Delta \tau \ge 0$ and other
parameters $\lambda,\rev{\lambda^+,\lambda^-}, \Delta t, \overline{q},\underline{q}$. We have
	\begin{itemize}
		\item [(1)] If \rev{$\lambda^+\leq \lambda/2$ and $\lambda^-\leq \lambda/2$} , then \rev{$\mathcal{P}=0$} for any $N \ge 1$.

\item [(2)] If \rev{$\lambda^+ > \lambda/2$ and $\lambda^- > \lambda/2$}, then there exists a finite positive integer $N_{min}$ depending on the fixed parameters such that \gao{$\mathcal{P}=0$ for $N < N_{\min}$ and $\mathcal{P}>0$ for $N \ge  N_{\min}$. }
	\end{itemize}
\end{theorem}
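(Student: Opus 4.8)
The plan is to work entirely with the reduced value, since by Theorem~\ref{BellmanThm} we have $v_0(w,p,0,\infty,\infty)=w+g_0(0,\infty,\infty)$, so $\mathcal{P}=g_0(0,\infty,\infty)$. Because the recursion \eqref{Bellman g_i}--\eqref{H_i 5} for $i\le N-1$ is time-homogeneous and the terminal function $g_N$ does not depend on $N$, I would write $\mathcal{P}=\phi(N):=(\mathcal{T}^N g_{term})(0,\infty,\infty)$, where $g_{term}:=g_N$ and $\mathcal{T}$ is the one-period Bellman operator $(\mathcal{T}u)(q,r^+,r^-)=\max_{(\delta^+,\delta^-)}\{E[u(q_1,r^+_1,r^-_1)]+H^+_{act}(r^+,\delta^+)+H^-_{act}(r^-,\delta^-)\}$ over admissible actions. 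Two observations then drive everything. First, taking the ``do nothing'' action $(\delta^+,\delta^-)=(o,o)$ in state $(0,\infty,\infty)$ gives reward $H^\pm_{act}(\infty,o)=H^\pm(0,\Delta t,\infty)=0$ and leaves the state unchanged, so $(\mathcal{T}u)(0,\infty,\infty)\ge u(0,\infty,\infty)$ for every $u$; iterating gives both $\phi(N)\ge g_{term}(0,\infty,\infty)=0$ and the monotonicity $\phi(N+1)\ge\phi(N)$. Second, if $u\le 0$ pointwise and all one-period order values are $\le 0$, then $\mathcal{T}u\le 0$. It therefore suffices to locate the sign of $\phi$.

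For Part~(1), suppose $\lambda^+\le\lambda/2$ and $\lambda^-\le\lambda/2$. Theorem~\ref{value of orders tau}(c) gives $H^\pm(t_1',t_2',\cdot)\le 0$ for all arguments, so each $H^\pm_{act}$ is $\le 0$ (being a sum of such terms, by \eqref{Hask act}) and $g_{term}\le 0$. Since $u\le 0$ implies $\mathcal{T}u\le 0$, induction yields $\phi(N)=(\mathcal{T}^N g_{term})(0,\infty,\infty)\le 0$, and combined with $\phi(N)\ge 0$ this forces $\mathcal{P}=0$ for every $N\ge 1$.

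For Part~(2) the real work is to exhibit a single admissible policy with strictly positive expected reward once $N$ is large; the monotonicity of $\phi$ then forces the dichotomy with $N_{min}:=\min\{N:\phi(N)>0\}$, finite by eventual positivity. Since $\lambda^\pm>\lambda/2$, Theorem~\ref{value of orders tau}(c) (with $t_1'=\Delta\tau$, $t_2'=\Delta t$) supplies relative prices $\delta^\pm_*$ with full-life values $c^\pm:=H^\pm(\Delta\tau,\Delta t,\delta^\pm_*)>0$. I would use the policy $\pi_*$ that posts $\delta^\pm_*$ on each side whenever admissible (so inventory stays in $[\underline q,\overline q]$) and otherwise cancels that side without reposting (never choosing $o$), liquidating at $t_{N+}$. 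Unfolding $g_0$ along $\pi_*$ gives reward $\sum_{i=0}^{N-1}(H^+_{act}+H^-_{act})+g_{term}(q_N,r^+_N,r^-_N)$, and the key step is a telescoping: each posted order contributes its new-order value $H^\pm(\Delta\tau,\Delta t-\Delta\tau,\delta^\pm_*)$ in its posting period and, being cancelled one period later, its outstanding value $H^\pm(0,\Delta\tau,\cdot)$ over the next latency window (the latter being exactly the term carried inside $g_{term}$ at the final epoch). The two reference mid-prices differ by a mean-zero increment independent of the fill event, so by the martingale property of the mid-price the paired contribution collapses in expectation to $c^\pm$. Hence $E^{\pi_*}[\text{reward}]=c^+E[m^+_N]+c^-E[m^-_N]-0.5\,E[|q_N-\mathds{1}_{fill^+_0}+\mathds{1}_{fill^-_0}|]$, where $m^\pm_N$ counts the posting epochs on each side and the terminal penalty is bounded below by a constant $-K$ with $K:=0.5(\max(\overline q,-\underline q)+2)$, independent of $N$.

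It remains to show $E[m^\pm_N]$ grow linearly in $N$. Under $\pi_*$, the posting decisions depend only on the inventory and on whether each outstanding order is present; averaging the one-period fill probabilities over the period-wise i.i.d.\ fluctuations of the outstanding order's relative price turns the triple $(q_i,a_i,b_i)$ (inventory, ask-present and bid-present flags) into a finite Markov chain on $\{\underline q,\dots,\overline q\}\times\{0,1\}^2$. This chain is irreducible (each single fill and each no-fill has positive probability, as $0<\lambda,\lambda^\pm<\infty$ and $\Delta t-\Delta\tau>0$), so by the ergodic theorem $E[m^\pm_N]/N$ converges to the stationary probability of the nonempty admissible-posting set, which is strictly positive; thus $\phi(N)\ge c^+E[m^+_N]+c^-E[m^-_N]-K$ grows linearly and is eventually positive, making $N_{min}$ finite, after which monotonicity gives $\phi(N)=0$ for $N<N_{min}$ and $\phi(N)\ge\phi(N_{min})>0$ for $N\ge N_{min}$. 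The main obstacle is this Part~(2) lower bound, on two counts: (a) absorbing the latency cost, since the outstanding-order value $H^\pm(0,\Delta\tau,\cdot)$ over the uncancellable window is generally negative and cannot be controlled period-by-period, so it must be combined with the preceding new-order value through the martingale/telescoping argument to recover the positive $c^\pm$; and (b) the inventory recurrence, i.e.\ proving the admissible-posting epochs form a linear fraction of $N$, for which the reduction to a finite irreducible chain plus the ergodic theorem is the cleanest route. The monotonicity, the nonpositivity induction, and the terminal-penalty bound are routine by comparison.
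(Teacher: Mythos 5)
Your overall architecture is sound and matches the paper's for most of the statement: the reduction $\mathcal{P}=g_0(0,\infty,\infty)$, the monotonicity of $\phi(N)$ obtained from the do-nothing action at $(0,\infty,\infty)$ (the paper packages exactly this as Lemma~\ref{NP vs N}), and the Part~(1) nonpositivity induction are the same argument as in the paper. For Part~(2) you take a genuinely different route. The paper uses an alternating policy: post at even epochs, cancel at odd epochs, so the inventory stays in $\{-1,0,1\}$, there are never outstanding orders at posting epochs, and the expected reward solves the closed-form scalar recursion $g^{\tilde{\pi}}_i(\pm 1)=H^{\pm}(\Delta\tau,\Delta t,\delta^{\pm,M})+(1-p^{\pm})\,g^{\tilde{\pi}}_{i+2}(\pm 1)$, which also yields an explicit upper bound $\overline{N}_{min}$ on $N_{min}$. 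You instead post every period and recover the full-life value $c^{\pm}$ by pairing each new-order value with the next period's outstanding-order tail; that telescoping is legitimate (it is the same martingale identity, cf.\ \eqref{value decom2}, used in the proof of Theorem~\ref{BellmanThm}), and your reward decomposition $\phi(N)\ge c^{+}E[m^{+}_N]+c^{-}E[m^{-}_N]-K$ is correct.

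The gap is in the occupation-time step. The triple $(q_i,a_i,b_i)$ is \emph{not} a Markov chain under your policy: the fill probability of an outstanding order in the latency window $[t_i,t_{i+})$ depends on its relative price $r^{\pm}_i=\delta^{\pm}_{*}\mp\Delta p[t_{i-1},t_i]$, and the conditional law of that relative price given $(q_i,a_i,b_i)$ depends on further history. For instance, it is correlated with the last inventory increment: the same downward price move that pushes $r^{+}_i$ up also makes a bid fill (hence $q_i=q_{i-1}+1$) more likely, and $(q_i,a_i,b_i)$ does not record $q_{i-1}$. So ``averaging over i.i.d.\ fluctuations'' does not produce a finite Markov chain, irreducibility of that triple is not even well defined, and the ergodic theorem cannot be invoked as stated; a rigorous version would have to analyze the countable-state chain $(q_i,r^{+}_i,r^{-}_i)$ and prove recurrence of the posting-admissible set, which is real additional work. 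Fortunately your own setup admits a much simpler repair that bypasses ergodicity entirely: since $\underline{q}<-1<1<\overline{q}$, the sets $\underline{S}$ and $\overline{S}$ are disjoint, so at every epoch at least one side is admissible for posting; hence $m^{+}_N+m^{-}_N\ge N$ pointwise, and because \emph{both} $c^{+}>0$ and $c^{-}>0$, your bound becomes $\phi(N)\ge \min(c^{+},c^{-})\,N-K\to\infty$. That gives eventual positivity, and with your monotonicity the stated dichotomy follows. With this substitution your proof is complete, though, unlike the paper's, it does not produce an explicit computable bound on $N_{min}$.
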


We next discuss the implications of Theorem~\ref{net profit}.

Part (1) of this result says that under the conditions that the rates of `uninformed' market orders that transact with market maker's limit orders \rev{$\lambda^+, \lambda^-$} are smaller than $\lambda/2$, then the market maker can not make a positive profit regardless of how many times the market maker quotes. These conditions are likely to hold when either the market is highly volatile with a large rate of price change $\lambda$ or there are not sufficient `uninformed' market orders hitting the market maker's limit orders which leads to low values of \rev{$\lambda^+$ and $\lambda^-$}. Part (1) then suggests that in these scenarios, electronic market making on the single asset is not profitable.

Part (2) of this result suggests that the market making strategy can be profitable if the market conditions are good in the sense that \rev{$\lambda^+, \lambda^- > \lambda/2$},  and the market maker can quote a large number of times, or equivalently, have a long trading horizon when the period length $\Delta t$ is fixed. Though we do not have an analytical expression for ${N}_{\min}$, this quantity has an explicit computable upper bound which only involves the values and fill probabilities of certain ask and bid orders. See the proof of Theorem~\ref{net profit} for details.

To understand the intuition behind Theorem~\ref{net profit}, we use the characterization of the order value in Theorem \ref{value of orders tau} and the value functions in Theorem \ref{BellmanThm}. On the one hand, by Part (c) of Theorem \ref{value of orders tau}, there are no orders whose order values are positive when \rev{$\lambda^+\leq \lambda/2$ and $\lambda^-\leq \lambda/2$}. As Theorem \ref{BellmanThm} suggests that the order value essentially plays of the role of one-period reward in the dynamic optimization problem, we can infer that the market maker can not earn a positive profit. On the other hand,
when \rev{$\lambda^+, \lambda^- > \lambda/2$}, Part (c) of Theorem \ref{value of orders tau} shows that there exists an ask-bid quote pair whose order values are both positive. Then we can use this quote pair to construct an explicit feasible market making policy, for which the profit becomes positive for sufficiently large number of quoting times $N$.

Before we proceed to numerical experiments, we briefly comment on the cases \rev{$\lambda^+>\lambda/2\ge \lambda^-$ and $\lambda^->\lambda/2\ge \lambda^+$}. In both cases, there exist orders with positive order value for one side while there are none for the other side. Depending on the values of \rev{$\lambda^+, \lambda^-$} and other model parameters, it is possible that \rev{$\mathcal{P}=0$} for any $N$ (as in Part (1) of Theorem~\ref{net profit}) or
\rev{$\mathcal{P}>0$} for sufficiently large $N$ (as in Part (2) of Theorem~\ref{net profit}).

\section{Numerical experiments} \label{sec: numerical results}
In this section, we present numerical results.
Section~\ref{sec:estimation} discusses estimations of model parameters using NASDAQ data. Section~\ref{simulation} discusses a representative example of the optimal quoting policy of the market maker and the associated inventory process.
Section~\ref{sec:ordervalue-latency} illustrates our results on order value. Section~\ref{sec:profit-latency} illustrates our main results on profitability of market making and effect of latency numerically. Our numerical results are based on solving the backward recursion in Theorem \ref{BellmanThm}, where we can compute the functions $g_i$, and find the optimal quotes by truncations of the infinite state and action spaces and using exhaustive search for the maximization problem in \eqref{Bellman g_i}.


\subsection{Estimations} \label{sec:estimation}
\noindent
We discuss the estimations of model parameters $\lambda,\rev{\lambda^+, \lambda^-}$ in this section. Other parameters such as the quote duration $\Delta t$, the trading horizon $T$, the inventory bounds $\underline{q},\overline{q}$, are all chosen by the market maker.

\subsubsection{Data description}
\noindent
 We use NASDAQ's TotalView-ITCH data, which contains message data of order events.\footnote{Data is provided by LOBSTER website (https://lobsterdata.com/).} The database documents all the order activities causing an update of the limit order book up to the requested number of levels and thus includes visible orders' submissions, cancellations and executions with order reference numbers. Each visible limit order is identified with a unique order reference number which is assigned immediately after the submission. The timestamp of these events is measured in seconds with decimal precision of at least milliseconds and up to nanoseconds depending on the requested number of levels. 

We conduct the parameter estimation using one representative large-tick stock, General Electric Company (GE), with data from 10:00 a.m. to 4:00 p.m. on a randomly selected day: Oct 3, 2016.
Table \ref{ticksize} shows the observations of the bid-ask spreads on that day. As one can see, the spread of GE is 1 tick for the most of the time and the spread is rarely larger than 2 ticks.
We also report that the average sizes of limit order queues on best ask and best bid are 8758 and 7727 shares respectively.

\begin{table}[h]
	\begin{center}
		\caption{Percentages of observations with different bid-ask spreads for GE on Oct 3, 2016.}
		{\begin{tabular}{@{}lccc}
				\hline
				Bid-ask spread  &1 tick & 2 ticks  & $\geq$ 3 ticks\\
				Percentage& 88.236   & 11.757 & 0.007\\
				\hline
		\end{tabular}}
		\label{ticksize}
	\end{center}
\end{table}

\subsubsection{Estimations of model parameters}\label{sec: estimates}
\noindent
We first discuss how to estimate $\lambda$, the intensity of market price change in our model.
As we assume in the model that the bid-ask spread is always 1 tick, we first delete the data when the bid-ask spread is more than 1 tick.
Then, we can estimate the intensity of price change using the average number of jumps of the mid-price per minute during the day. This yields
\begin{equation*}
\lambda=\lambda_{GE}=1.56 \quad \text{ per minute.} 
\end{equation*}

We then discuss how to estimate \rev{$\lambda^+, \lambda^-$}, the rates of `uninformed' market orders that transacts with the market maker's limit orders at best quotes.
We count a market order as an `uninformed' market order if the mid-price does not change after the market order arrives and generates trades, and then we estimate the intensity of arrivals of the total `uninformed' market orders by computing the average numbers of arrivals per minute. For simplicity, we use half of the total intensity for buy and sell `uninformed' market orders, which leads to
\begin{equation*}
\rev{\lambda^+_{GE}=\lambda^-_{GE}}=1.25 \quad \text{ per minute.} 
\end{equation*}
Note that \rev{$\lambda^+_{GE}$ and $\lambda^-_{GE}$} correspond to the case in which a market maker's orders are always at the front of the queue at the best quotes.
For a particular market maker, the rates \rev{$\lambda^+$ and $\lambda^-$} could be less than \rev{$\lambda^+_{GE}$} as the orders he sent may not have the highest execution priority in the queue. In addition, \rev{$\lambda^+$ and $\lambda^-$} may depend on other factors including the latency and the speed advantage of a particular market maker compared with traders performing similar strategies.
For illustration purposes, in our numerical experiments, we fix \rev{$\lambda^+= \lambda^-= 0.7 \cdot \lambda^+_{GE} = 0.875$ per minute $>0.5\lambda_{GE}$} unless otherwise specified. In particular, we assume \rev{$\lambda^+, \lambda^-$} are independent of the latency $\Delta \tau$ in this section.
We remark that it is possible to estimate \rev{$\lambda^+$ and $\lambda^-$} and show their dependence on $\Delta \tau$ for a particular market maker
based on our theoretical analysis in the paper. See Appendix~\ref{sec:est-lambda-a} for details.

\subsection{The market maker's optimal quotes and inventory process}\label{simulation}

Based on the parameters estimated above, we show in Figure \ref{revsimulation} a representative example of the market maker's optimal quotes and the inventory process in one simulation. In this example, the market maker quotes every one second in an 100-minute window. The latency is fixed at 0.02 seconds. In this sample path,
the market maker sends 129 ask orders and 35 bid orders in total. Among all these orders, 13 ask orders and 13 bid orders are executed.

\rev{In the optimal quoting strategy,
we use quote value 15 to denote the action $\delta^+=o$, i.e., do nothing on the ask side. Similarly, we use
quote value -14 to denote $\delta^-=o$. For any other value $y\in (-14,15)$ in the left vertical axis, it represents the market maker's quote $\delta^+=y-1$ for the ask side or $\delta^-=-y$ for the bid side.} For this particular sample path, we observe that the action of the maker is doing nothing for many periods. This is because sending new orders have potential drawbacks due to latency: if the market price moves in the latency window, then the new orders sent by the maker may enter into undesired price levels or get `adversely filled' when arriving at the order book. So if the maker already has outstanding orders at suitable prices, then it is possible that doing nothing is better than sending new orders to replace the outstanding orders.
We also observe that when the maker has nonzero inventories, his quote pair may be asymmetric. For example, when time is 450 seconds, \rev{the inventory is -2 (the right axis) and the action is $(\delta^+,\delta^-)=(4,0)$}. In this case, the market maker sells less aggressively given that the inventory is negative.
\begin{figure}[h]
	\centering
	\includegraphics[width=0.75\textwidth]{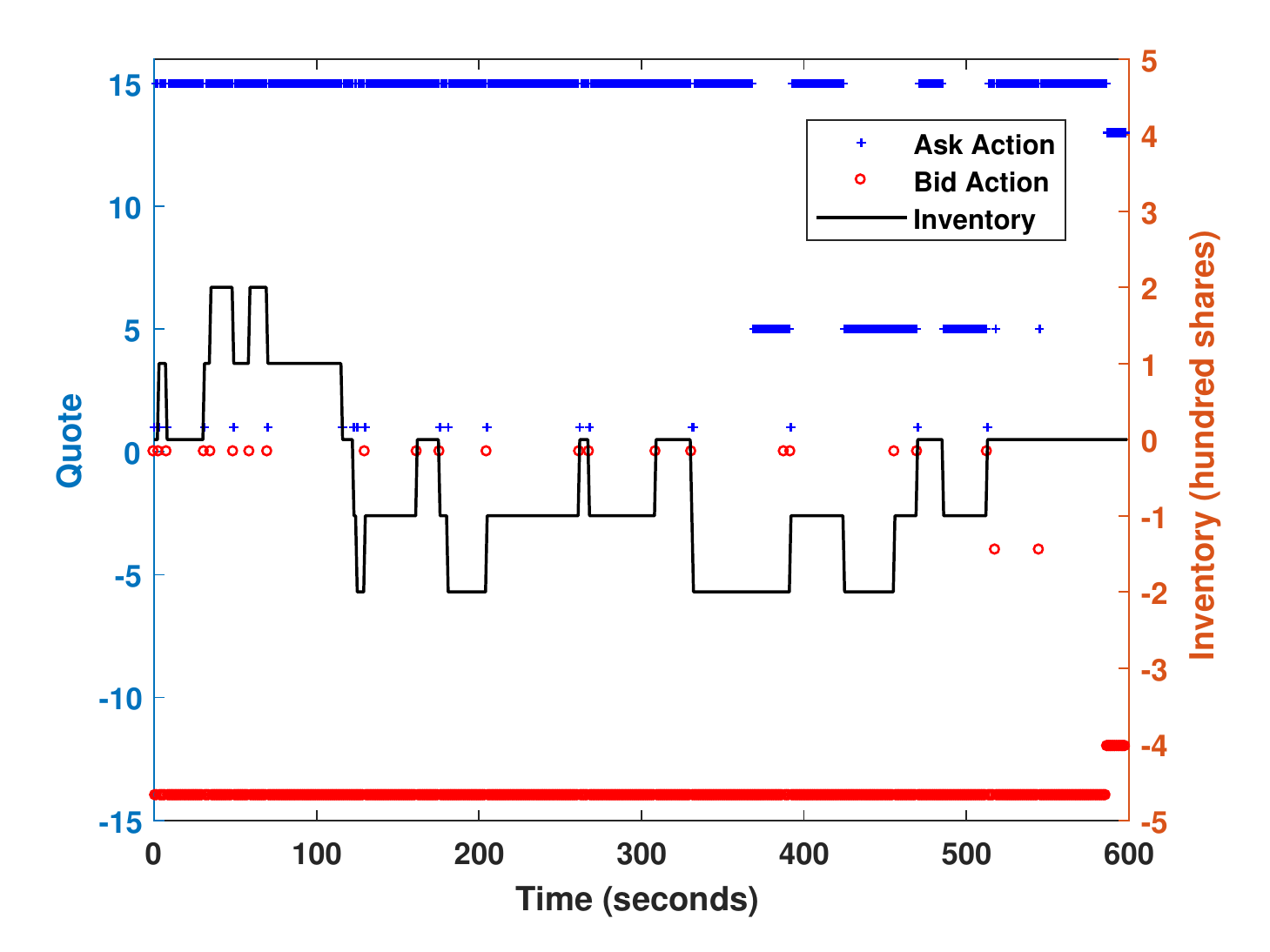}
	\caption{\label{revsimulation} The market maker's optimal quotes and the inventory process in one simulation. The model parameters are: $\underline{q}=-4, \overline{q}=4, \Delta \tau=0.02$ seconds, $\Delta t= 1$ second, $T=600$ seconds, $\lambda=\lambda_{GE}$ and \rev{$\lambda^+=\lambda^-=0.7 \lambda^+_{GE}$}.}
\end{figure}

\subsection{Order Value and Latency} \label{sec:ordervalue-latency}
\noindent
In this section, we use numerical experiments to illustrate our results in Sections~\ref{sec:ordervalue} on order value
based on the parameters from Section~\ref{sec: estimates}.

\rev{Table \ref{revordervaluetable} shows, using two representative examples with different latencies, the value of new ask orders sent by the maker \rev{$H^{+}(\Delta\tau,\Delta t-\Delta\tau,\delta^+)$} (see \eqref{H(xy)} and \eqref{Hask act}) for different quote decisions \rev{$\delta^+$}.} We only show the order value for \rev{$\delta^+$ from 0 to 4}, because the order value is almost zero for \rev{$\delta^+>4$} (as the order fill probability is very small) and the order value is close to $-0.5$ for \rev{$\delta^+<0$} (as the ask order will be filled instantly with a high probability when arriving at the order book).

We can make two observations from \rev{Table \ref{revordervaluetable}}. First, for both cases in \rev{Table \ref{revordervaluetable}}, there exists \rev{$\delta^+$} such that the order value is positive. This is consistent with Part (c) of Theorem \ref{value of orders tau} as \rev{$\lambda^+=\lambda^-> \lambda/2$} in \rev{Table \ref{revordervaluetable}}. Second, the maximum order value is attained at \rev{$\delta^+=0$} for $\Delta \tau=0.2$ seconds, which is $1.48 \times 10^{-3}$, and at \rev{$\delta^+=1$} for $\Delta \tau=0.8$ seconds, which is $4.58\times 10^{-5}$. This is not surprising in view of Theorem \ref{value of orders tau}. On the one hand,
when $\Delta\tau=0$, by Part (a) of Theorem \ref{value of orders tau}, the order value equals to a constant multiplied by the order fill probability and hence it attains the maximum when \rev{$\delta^+=0$}. Then intuitively one expects this also holds for small $\Delta\tau$ by continuity. On the other hand, when $\Delta\tau$ becomes larger, the probability that the ask order quoted at \rev{$\delta^+=0$} is filled instantly (`adversely fill') when arriving at the order book increases, due to the possible price movement in the latency window. Thus, order value at \rev{$\delta^+=0$} may become negative and hence the maximum value may be attained at a larger \rev{$\delta^+$}. 

\begin{table}[h]
	\begin{center}
		\caption{Value of new ask orders \rev{(measured in dollars)} \rev{$H^{+}(\Delta\tau,\Delta t-\Delta\tau,\delta^+)$} for different \rev{$\delta^+$} and latency $\Delta\tau$. The remaining model parameters are: $\Delta t=4$ seconds, and $\lambda=\lambda_{GE}, \rev{\lambda^+=\lambda^-=0.7\lambda^+_{GE}}>0.5\lambda_{GE}$.}
	\footnotesize{\begin{tabular}{@{}lccccc}
				\hline
				
				$\delta^+$  & 0 & 1 & 2 & 3 &4\\
				order value($\Delta\tau=0.2$ seconds) & $1.48 \times 10^{-3}$   & $7.36 \times 10^{-5}$ & $1.31 \times 10^{-6}$ & $1.91\times 10^{-8}$  & $1.27 \times 10^{-9}$ \\
				order value($\Delta\tau=0.8$ seconds) & $-2.79 \times 10^{-3}$   & $4.58 \times 10^{-5}$ & $1.20 \times 10^{-6}$ & $1.76\times 10^{-8}$  & $ 6.49 \times 10^{-10}$ \\
				\hline
		\end{tabular}}
		\label{revordervaluetable}
	\end{center}
\end{table}

Figure \ref{revoutordervaluepic} shows, the value of ask outstanding orders \rev{$H^{+}(0,\Delta\tau,r^+)$} (see \eqref{H(xy)} and \eqref{Hask act}) with $r^+=0, 1$, as a function of latency $\Delta\tau$, in an unfavorable situation:
\rev{$\lambda^+=\lambda^-<0.5\lambda$}. Note that the order values are all negative which is consistent of part (a) of Theorem \ref{value of orders tau}. The fill probability of an outstanding order increases as latency increases, and hence by part (a) of Theorem \ref{value of orders tau}, the order value decreases. Note the value of outstanding orders plays a important role in the value function (see \eqref{Hask act}). Thus, this experiment shows another disadvantage of latency for the market maker: if the maker cannot cancel his stale orders in a timely fashion due to latency, then it will decrease his profit when the market condition is unfavorable.

\begin{figure}
	\centering
	\includegraphics[width=0.75\textwidth]{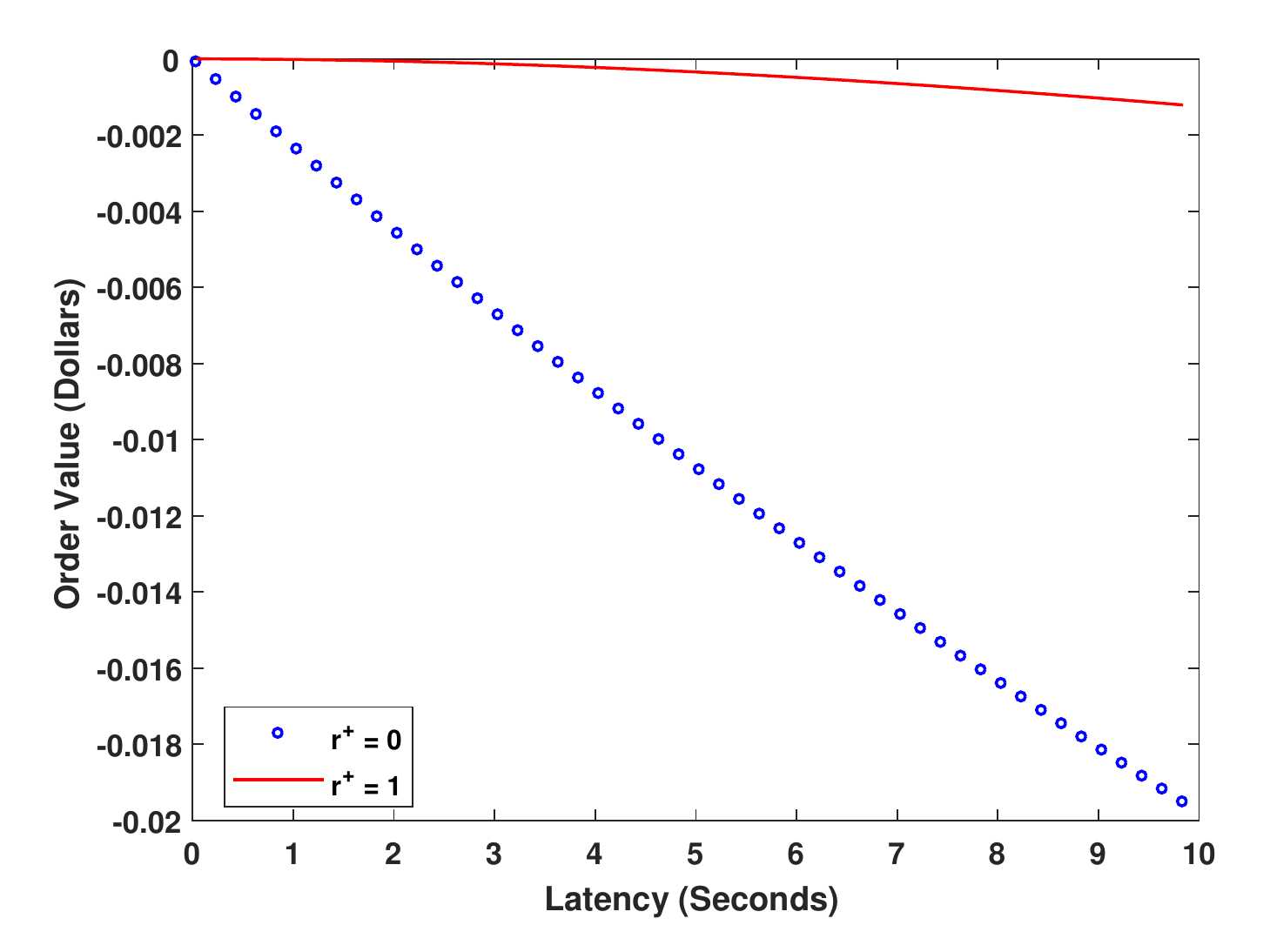}
	\caption{\label{revoutordervaluepic} Value of ask outstanding orders \rev{$H^{+}(0,\Delta\tau,r^+)$} as a function of $\Delta\tau$ (measured in seconds) for different \rev{$r^+$}. The remaining model parameters are: $\lambda=\lambda_{GE}$ and
		$\rev{\lambda^+=\lambda^-=0.4\lambda^+_{GE}}<0.5\lambda_{GE}$.}
\end{figure}

\subsection{\rev{Expected Profit} of Market Making and Latency} \label{sec:profit-latency}
In this section, we illustrate numerically our results in Section~\ref{sec:profit-mm} on the profitability of market making and effect of latency.

We show in Figure \ref{revprofitandlatency1}, using two representative examples with \rev{$\lambda^+=\lambda^-=0.7\lambda^+_{GE}>0.5\lambda_{GE}$ } and \rev{$\lambda^+=\lambda^-=0.5\lambda_{GE}$}, the market maker's \rev{expected profit $\mathcal{P}$} as a function of latency $\Delta\tau$. There are two immediate observations from Figure \ref{revprofitandlatency1}. First, we find that when \rev{$\lambda^+=\lambda^-= 0.5\lambda_{GE}$}, \rev{$\mathcal{P}$} are always zero for any $\Delta \tau$. This is also consistent with Part (1) of Theorem~\ref{net profit}. Second, fixing other parameters, \rev{$\mathcal{P}$} is a non-increasing function of $\Delta\tau$. In particular, when the latency
$\Delta\tau$ is large, then \rev{$\mathcal{P}$} becomes zero. This is because the number of quotes $N=599$ is fixed in this example, while numerically the threshold $N_{\min}$ for earning positive profit in Theorem~\ref{net profit} increases with $\Delta\tau$. So when the latency is large, we have $N < N_{\min}$ and the \rev{expected profit} of the market maker is zero by Part (2) of Theorem~\ref{net profit}.

\begin{figure}
	\centering
	\includegraphics[width=0.75\textwidth]{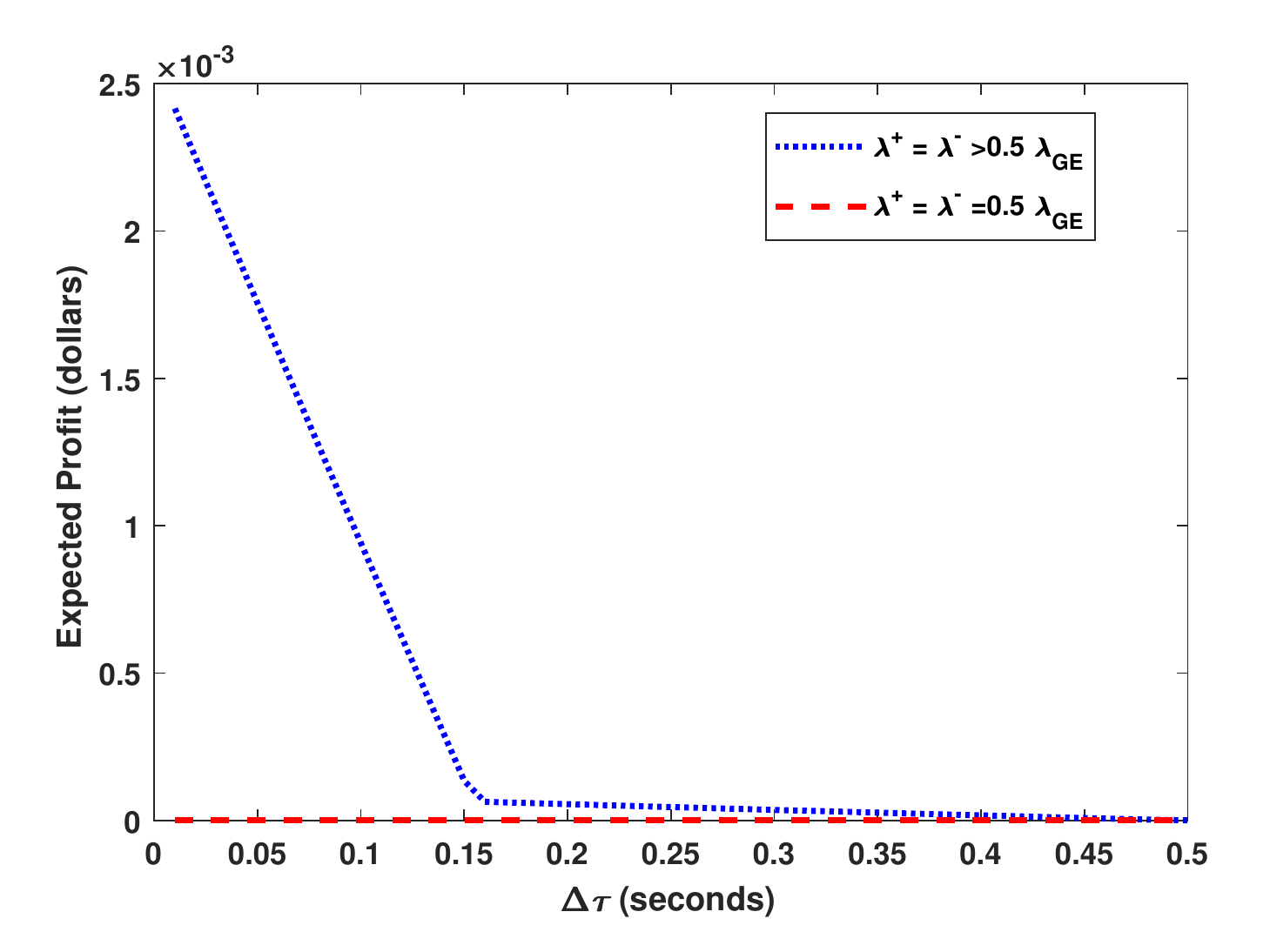}
	\caption{\label{revprofitandlatency1} Net Profit \rev{$\mathcal{P}$} of the market maker as a function of $\Delta\tau$ for different \rev{$\lambda^+$ $(=\lambda^-)$}. The remaining model parameters are: $\underline{q}=-2, \overline{q}=2, \Delta t=0.5$ seconds, $T=90$ seconds, and $\lambda=\lambda_{GE}$.}
\end{figure}

\section{Concluding remarks and future research} \label{sec:conclusion}

In this paper, we propose a stylized discrete-time MDP model to study
optimal market making strategies for large--tick assets in the presence of latency. We provide explicit characterizations of the order value, as well as the structure of the value functions. We use such characterizations to study the profitability of market making strategies and the effect of latency. Our analysis suggests that latency can be an additional risk source for market makers and one key task in market making is to predict the order values based on market primitives, see, e.g., \cite{Moallemi2} for an example of recent developments along this line. For future research to extend our model in the current paper, a variety of realistic features can be considered. For example, the latency a market maker experiences in practice is random and may vary when market condition changes, e.g.
due to a huge increase in quotes and cancellation requests. In addition, high-frequency asset price typically demonstrates autocorrelations in real markets. We hope that our model stimulates further research to address such practical issues.

\section*{Acknowledgements}
We thank Sebastian Jaimungal, Steve Kou, and two anonymous reviewers for their suggestions and comments. This research is supported by Hong Kong RGC Grants 24207015 and 14201117.

\newpage

\clearpage
\appendix


\setcounter{equation}{0}
\renewcommand{\theequation}{A\arabic{equation}}
\section{Further details of the Model in Section~\ref{model}}
\subsection{The mathematical expression of the admissible action space} \label{sec:As}
We adopt the following conventions related to the operations for $\pm\infty$: (1) $-\infty<x<\infty$ for any $x\in\mathbb{Z}$; (2) $\infty+x=\infty$ for any $x\in\mathbb{Z}\cup{\{\infty\}}$; (3) $-\infty+x=-\infty$ for any $x\in\mathbb{Z}\cup{\{-\infty\}}$; and (4) for any $x\in\mathbb{Z}\cup{\{\pm\infty\}}$, $\pm\infty \times x = \infty$ if $x>0$; $\pm\infty \times x = -\infty$ if $x<0$;
$\pm\infty \times x =0$ if $x=0$.

We now give the expression of the market maker's admissible action space $A_s$ for state \rev{$s=(w,p,q,r^+,r^-)\in S$}.
\rev{We first define two subsets of $S$ in which the inventory of the maker may reach the bounds.
\begin{equation}\label{Sdown}
\underline{S} :=
\left\{
\begin{aligned}
&\{(w, p, q, r^+,r^-) \in S: q=\underline{q} \}, &\Delta \tau=0,\\
&\{(w, p, q, r^+,r^-)\in S: q=\underline{q} \text{ or } q=\underline{q}+1, r^+<\infty \}, &\Delta \tau>0.\\
\end{aligned}
\right.
\end{equation}
The set $\underline{S}$ contains the states in which the market maker's inventory has either reached the lower bound $\underline{q}$ or will reach the lower bound if the ask outstanding order gets filled and the bid order does not get filled. Note that the outstanding ask orders will get canceled instantly when a new ask order is sent if latency $\Delta\tau=0$.
In these cases, the market maker should not quote new ask orders or he should use a buy market order in order for the inventory to stay within the bound, i.e., $\delta^+\in \{\infty,o\}$ or $\delta^-=-\infty$. Similarly we define a set  $\overline{S}$ for the upper inventory bound $\overline{q}$ as follows.
\begin{equation}\label{Sup}
\overline{S} :=
\left\{
\begin{aligned}
&\{(w, p, q, r^+,r^-) \in S: q=\overline{q} \}, &\Delta \tau=0,\\
&\{(w, p, q, r^+,r^-)\in S: q=\overline{q} \text{ or } q=\overline{q}-1, r^-<\infty \}, &\Delta \tau>0.\\
\end{aligned}
\right.
\end{equation}
Then, the admissible action space for state $s$ is given by
\begin{equation}\label{A}
	A_s:=
	\left\{
	\begin{aligned}
	&\{\infty,o\}\times \mathbb{\overline{Z}}^o\cup\mathbb{\overline{Z}}^o\times\{-\infty\} , \quad &s\in \underline{S},\\
	&\{-\infty\}\times \mathbb{\overline{Z}}^o\cup\mathbb{\overline{Z}}^o\times\{\infty,o\} , \quad &s\in \overline{S},\\
	&\mathbb{\overline{Z}}^o\times\mathbb{\overline{Z}}^o, \quad &\text{otherwise}.
	\end{aligned}
	\right.
\end{equation}}

\subsection{System dynamics for our MDP model}\label{sec:dynamics}
\noindent
We now describe the dynamics of the discrete system states of the MDP, i.e., $s_i$, $i=0,1,2,...,N,\rev{N+}$. For $i=0,1,...,N-1$, denote the $i$-th action/decision of the maker by \rev{$(\delta^+_i,\delta^-_i)$}.
For period $i=0,1,...,N$, we use two indicator functions \rev{$\mathds{1}_{fill^+_i}$ and $\mathds{1}_{fill^-_i}$} to specify whether the ask and bid outstanding orders (if any) at time $t_i-$ are filled (1 if filled; 0 if not) before time \rev{$t_{i+}$} respectively. Note these outstanding orders will be canceled at time \rev{$t_{i+}$} if the market maker sends cancellation instructions in the $i$-th action.
Similarly, for $i=0,1,2,...N-1$, we use two indicator functions \rev{$\mathds{1}_{fill^+_{i+}}$ and $\mathds{1}_{fill^-_{i+}}$} to specify whether there are any ask and bid orders of the maker filled in the time interval \rev{$[t_{i+},t_{i+1})$} respectively. If the maker sends new orders and cancellation instructions at time $t_{i}-$, then \rev{$\mathds{1}_{fill^+_{i+}}$ and $\mathds{1}_{fill^-_{i+}}$} specify whether the orders sent at time $t_i-$ are filled in \rev{$[t_{i+},t_{i+1})$}. Otherwise if the maker does not send any new orders or cancellation instructions at $t_i-$, i.e, \rev{$\delta^+,\delta^-=o$}, then the two indicator functions specify whether the outstanding orders (if any) are filled in \rev{$[t_{i+},t_{i+1})$}. If the maker sends cancellation instructions without new orders, i.e, \rev{$\delta^+=\infty,\delta^-=\infty$}, then there will be no orders of the maker in \rev{$[t_{i+},t_{i+1})$} and \rev{$\mathds{1}_{fill^+_{i+}}=\mathds{1}_{fill^-_{i+}}=0$}.

We now describe the dynamics of system states \rev{$(w,p,q,r^+,r^-)$} from $t_i-$ to $t_{i+1}-$ for $i=0,1,..,N-1$. To begin with, we define the price changes \rev{$\Delta p_i:=p(t_{i+})-p(t_i)=\sum\limits_{j=\mathcal{N}(t_i)+1}^{\mathcal{N}({t_{i+}})}{X_j}$ and $\Delta p_{i+}:=p(t_{i+1})-p(t_{i+})=\sum\limits_{j=\mathcal{N}(t_{i+})+1}^{\mathcal{N}({t_{i+1}})}{X_j}$}. Then we can readily obtain that
\rev{\begin{eqnarray}
w_{i+1}&=&w_{i}+(p_i+0.5+r^+_i)\mathds{1}_{fill^+_i}-(p_i-0.5-r^-_i)\mathds{1}_{fill^-_i}  +p_{fill^+_{i+}}\mathds{1}_{fill^+_{i+}}-p_{fill^-_{i+}}\mathds{1}_{fill^-_{i+}}, \nonumber\\
 \label{w i to i+1}\\
p_{i+1} &=& p_i+\Delta p_i+\Delta p_{i+}, \label{p i to i+1}\\
q_{i+1}&=&q_i-\mathds{1}_{fill^+_i}+\mathds{1}_{fill^-_i}-\mathds{1}_{fill^+_{i+}}+\mathds{1}_{fill^-_{i+}}, \label{q i to i+1} \\
r^\pm_{i+1}&=&(1-\mathds{1}_{out^\pm_{i+1}})\cdot\infty+ \mathds{1}_{out^\pm_{i+1}}r_{out^\pm_{i+1}}, \label{a i to i+1}
\end{eqnarray}}
where
\rev{
\begin{equation*}
p_{fill^\pm_{i+}}:=
	\left\{
	\begin{aligned}
	&p_i\pm\max\{0.5+\delta^+_i,-0.5\pm\Delta p_i\}, &\delta^\pm_i \in \mathbb{\overline{Z}},\\
	&p_i\pm(0.5+r^\pm_i), &\delta^\pm_i= o,
	\end{aligned}
	\right.
\end{equation*}
\begin{minipage}{.5\linewidth}
\begin{equation*}
\mathds{1}_{out^\pm_{i+1}}:=
\left\{
\begin{aligned}
&1-\mathds{1}_{fill^\pm_{i+}}, &\delta^\pm\in\mathbb{\overline{Z}},\\
&1-\mathds{1}_{fill^\pm_{i}}-\mathds{1}_{fill^\pm_{i+}}, &\delta^\pm=o,
\end{aligned}
\right.
\end{equation*}
\end{minipage}
\begin{minipage}{.5\linewidth}
\begin{equation*}
r_{out^\pm_{i+1}}:=
\left\{
\begin{aligned}
&\delta^\pm_i \mp (\Delta p_i+\Delta p_{i+}),
&\delta^\pm\in\mathbb{\overline{Z}},\\
&r^\pm_i \mp (\Delta p_i+\Delta p_{i+}), &\delta^\pm=o,
\end{aligned}
\right.
\end{equation*}
\end{minipage}}

We briefly explain the dynamics of wealth and outstanding orders as others are straightforward to see. In the wealth dynamics, the market maker earns an amount equal to the execution price if an ask order is filled, and pays an amount equal to the execution price if a bid order is filled.
For the $i$-th period,  \rev{$p_{fill^+_{i+}}$} is the price of the ask order of the maker that is filled in the time interval \rev{$[t_{i+},t_{i+1})$} (if any). If \rev{$\delta^+\in \mathbb{\overline{Z}}$}, then the ask outstanding order will be canceled at time \rev{$t_{i+}$}. Thus, the ask order filled in \rev{$[t_{i+},t_{i+1})$} is the new order sent in the $i$-th action. Its execution price is the larger of its own price \rev{$p_i+0.5+\delta^+_i$} and the market best bid price at time \rev{$t_{i+}$}, i.e., \rev{$p_i-0.5+\Delta p_i$}. Otherwise, if \rev{$\delta^+=o$}, then the outstanding order will not be canceled in the $i$-th period and the execution price in \rev{$[t_{i+},t_{i+1})$} is the price of the ask outstanding order. It is similar for the bid side.

For the outstanding orders, taking the ask side as an example, \rev{$\mathds{1}_{out^+_{i+1}}$} specifies whether there exists an ask outstanding order at time $t_{i+1}-$ (1 if exists; 0 if not) and \rev{$r_{out^+_{i+1}}$} is its relative price comparing with the best \rev{ask} price at time $t_{i+1}-$. If \rev{$\delta^+\in \mathbb{\overline{Z}}$}, then such outstanding order exists if and only if the new ask order sent at time $t_i-$ is not filled in \rev{$[t_{i+},t_{i+1})$}. Otherwise if \rev{$\delta^+=o$}, then the order exists if and only the outstanding order for the $i$-the period (exists at time $t_i-$) is not filled in $[t_i,t_{i+1})$. If \rev{$\delta^+\in \mathbb{\overline{Z}}$}, the relative price is based on the price of the new order sent at time $t_i-$; otherwise it is based on the price of the outstanding order for the $i$-th period.

We next describe the dynamics from $t_N-$ to \rev{$t_{N+}-$}. We have
\rev{\begin{eqnarray}
w_{N+} &=&w_N+(p_N+0.5+r^+_N)\mathds{1}_{fill^+_N}-(p_N-0.5-r^-_N)\mathds{1}_{fill^-_N}, \label{w N to N.5} \\
p_{N+}&=&p_N+\Delta p_N, \label{p N to N.5}\\
q_{N+}&=&q-\mathds{1}_{fill^+_N}+\mathds{1}_{fill^-_N}, \label{q N to N.5} \\
r^\pm_{N+}&=&\mathds{1}_{fill^\pm_N}\cdot\infty+(1-\mathds{1}_{fill^\pm_N})(r^\pm_N\mp\Delta p_N). \label{a N to N.5}
\end{eqnarray}}
The main difference compared with the dynamics from $t_i$ to $t_{i+1}, i=0,1,..,N-1$ is that the market maker only unwinds his inventory position at time $t_N$ without posting new quotes. To see \eqref{a N to N.5}, note that if the ask outstanding order for the $N$-th period is filled, then there will be no ask outstanding orders at time $t_N-$. Otherwise, there will be an ask outstanding order at \rev{the relative price $p_N+0.5+r^+_N-(p_N+0.5+\Delta p_N)=r^+_N-\Delta p_N$ comparing with the best ask price at time $t_{N+}-$}. It is similar for the bid side.

\renewcommand{\theequation}{B\arabic{equation}}
\setcounter{equation}{0}
\section{Proofs of main results in Section~\ref{sec theoretical}}\label{proofs}
\subsection{Proof of Theorem \ref{value of orders tau}} \label{sec:proofOrderValue}
\begin{proof}
We only prove the results for the ask side. The bid side is similar.

\textbf{[Proof of Part (a)].}
For any \rev{$\delta^+<0$}, the ask order sent is immediately executed so
\rev{$\mathds{1}_{fill^+_{0,t_2',\delta^+}}\equiv1$}. It then follows from \eqref{H(xy)}
	that for any \rev{$\delta^+< 0$}, we have
	\rev{$H^{+}(0,t_2',\delta^+)=E[-0.5-\Delta p[0,t_2']]=-0.5$} by the martingale property of $p(\cdot)$.
	
For \rev{$\delta^+\ge0$}, from \eqref{H(xy)} we deduce that in order to establish  \eqref{eq:value1}, it suffices to show
	 \rev{\begin{equation}\label{conditionalvalue1}
	 E[\delta^++1-\Delta p[0,t_2'] \mid \mathds{1}_{fill^+_{0,t_2',\delta^+}}=1]=\dfrac{\lambda^+}{\lambda^++\lambda/2}.
	 \end{equation}}
To this end, we construct an appropriate Markov chain.
	
Denote the jump times of the two independent Poisson processes $\mathcal{N}(\cdot)$ and  \rev{$\mathcal{N}^+(\cdot)$ by $\tau_1, \tau_2, ...$ and $\tau^+_1, \tau^+_2, ...$} respectively. Define
	\begin{alignat*}{2}
	UA(t):=
	\left\{
	\begin{aligned}
	&1, \quad  \text{if} \; \{n\in\mathbb{N}: \tau_{\mathcal{N}(t)}<\rev{\tau^+_n}\leq t\}\neq \emptyset,\\
	&0,\quad \text{otherwise},
	\end{aligned}
	\right.
	& \hskip 4em &
	&UAsk(t):=(p(t),UA(t)).
 	\end{alignat*}
Here, $UA(t)$
	specifies if there are any `uninformed' buy market orders (\rev{$\mathcal{N}^+(\cdot)$}) have arrived at the best ask since the most recent jump of market price $\tau_{\mathcal{N}(t)}$ at time t.
	One can readily verify that $UAsk(t)$ is a continuous-time Markov chain with state space $\mathbb{Z}\times\{0,1\}$ and the following transition rates for any $p\in\mathbb{Z}$:
	\begin{alignat*}{2}
	(p,0) \longrightarrow
	\left\{
	\begin{aligned}
	&(p+1,0), \quad&\text{with rate} \ \lambda/2,\\
	&(p-1,0), \quad&\text{with rate}\  \lambda/2,\\
	&(p,1), \quad&\text{with rate} \  \rev{\lambda^+},\\
	\end{aligned}
	\right.
  & \hskip 5em &
	(p,1) \longrightarrow
	\left\{
	\begin{aligned}
	&(p+1,0), \quad&\text{with rate} \ \lambda/2,\\
	&(p-1,0), \quad&\text{with rate}\  \lambda/2,\\
  &(p,1), \quad&\text{with rate} \  \rev{\lambda^+},\\
	\end{aligned}
	\right.
	\end{alignat*}
	Also define:
	\rev{\begin{align*}
	&\tau_{fill_1}:=\inf\{t\geq0: UAsk(t)=(p(0)+\delta^+,1)\},\\
	&\tau_{fill_2}:=\inf\{t\geq0: UAsk(t)=(p(0)+\delta^++1,0)\},\\
	&\tau_{fill}:=\min\{\tau_{fill_1},\tau_{fill_2},t_2'\}.
	\end{align*}}
	If $\tau_{fill_1}<\min\{\tau_{fill_2},t_2'\}$, then the ask order sent at time 0 with relative price \rev{$\delta^+$} and without latency will be filled by an `uninformed' order before time $t_2'$; on the other hand, if $\tau_{fill_2}<\min\{\tau_{fill_1},t_2'\}$, then the mid-price will cross the price of ask order before time $t_2'$. The ask order will be filled before time $t_2'$ if and only if $\tau_{fill}<t_2'$. Then, we can \gao{infer from Equation (\ref{conditionalvalue1}) that}
	\rev{\begin{equation}\label{conditionalvalue2}
	\begin{split}
	&E[\delta^++1-\Delta p[0,t_2']\mid \mathds{1}_{fill^+_{0,t_2',\delta^+}}=1]\\
	=&E[p(0)+\delta^++1-p(\tau_{fill})\mid\tau_{fill}<t_2']
	-E[p(t_2')-p(\tau_{fill})\mid\tau_{fill}<t_2'],\\
	\end{split}
	\end{equation}}
	where the second term is zero by applying optional sampling theorem to the martingale $p(\cdot)$.
In addition, we note that \rev{if $\tau_{fill}=\tau_{fill_1}$, then $p(\tau_{fill})=p(0)+\delta^+$; if $\tau_{fill}=\tau_{fill_2}$, then $p(\tau_{fill})=p(0)+\delta^++1$}. Therefore, we infer that
	\rev{\begin{equation}\label{conditionalvalue3}
	\begin{split}
	E[p(0)+\delta^++1 -p(\tau_{fill})\mid\tau_{fill}<t_2']
	=\mathbb{P}(\tau_{fill}=\tau_{fill_1}\mid \tau_{fill}<t_2').
	\end{split}
	\end{equation}}
So to prove \eqref{conditionalvalue1}, it remains to show
	\begin{equation} \label{eq:inter-step}
	\mathbb{P}(\tau_{fill}=\tau_{fill_1}\mid \tau_{fill}<t_2') = \rev{\frac{\lambda^+}{\lambda^++ \frac{\lambda}{2}}}.
	\end{equation}
\gao{To this end, we consider the embedded discrete-time Markov chain (DTMC) of $UAsk(t)$. The embedded DTMC, denoted as $\{UAsk_n: n \ge 0 \}$ has the property that $UAsk_n\in\{(p(0) + \delta^+,1),(p(0) + \delta^++1,0)\}$ only if $UAsk_{n-1}\in\{(p(0)+\delta^+,0),(p(0) + \delta^+,1)\}$. In addition, it is clear from the transition rates of the continuous-time Markov chain $UAsk(t)$ that
for each $n,$
	\begin{align*}
\frac{\mathbb{P}(UAsk_n=(p(0) + \delta^+,1)\mid UAsk_{n-1}=(p(0) + \delta^+,0))}{\mathbb{P}(UAsk_n\in\{(p(0) + \delta^+,1),(p(0) + \delta^++1,0)\}\mid UAsk_{n-1}=(p(0) + \delta^+,0))}
	=\frac{\lambda^+}{\lambda^++\lambda/2}.
	\end{align*}
Together with the Markov property and the independence of the holding times and the jump transitions of the continuous time Markov chain, one can readily verify that \eqref{eq:inter-step} holds.}

 	\textbf{[Proof of Part (b)].}
 	By Equation \eqref{H(xy)}, we have, for any $t_1'\ge0$, $t_2'>0$ and \rev{$\delta^+\in \mathbb{\overline{Z}}$},
 	\gao{\begin{align*}
 	H^{+}(t_1',t_2',\delta^+)
 	=&E[(\max\{0.5+\delta^+-\Delta p[0,t_1'],-0.5\}-\Delta p[t_1',t_1'+t_2'])\mathds{1}_{fill^+_{t_1',t_2',\delta^+}}],
 	\end{align*}}
where the indicator function \rev{$\mathds{1}_{fill^+_{t_1',t_2',x}}$} specifies whether the ask order with relative price $x$ (which enters into the order book or executed at time $t_1'$) is filled before time $t_1'+t_2'$.
Note that \rev{$\{(p(t),\mathcal{N}^+(t)) : t\geq 0\}$} is a 2-dimensional process with stationary and independent increments. Then given $\Delta p[0,t_1']=k$ for any $k\in\mathbb{\overline{Z}}$, we can readily infer that
 \rev{\begin{align*}
 	\begin{aligned}
 	&E[(\max\{0.5+\delta^+-k,-0.5\}-\Delta p[t_1',t_1'+t_2'])\mathds{1}_{fill^+_{t_1',t_2',\delta^+}}
 	\mid \Delta p[0,t_1']=k]\\
 	=&E[(\max\{0.5+\delta^+-k,-0.5\}-\Delta p[0,t_2'])\mathds{1}_{fill^+_{0,t_2',\delta^+-k}}]\\
 	=&H^{+}(0,t_2',\delta^+-k).
 	\end{aligned}
 	\end{align*}}
 	Hence, we obtain the desired result from the tower property of the conditional expectations.

 	\textbf{[Proof of Part (c)].} First, suppose \rev{$\lambda^+\leq \lambda/2$}. When $t_1'=0$, by part (a) of Theorem \ref{value of orders tau}, it is clear that for any \rev{$\delta^+\in\mathbb{\overline{Z}}$}, we have $H^{+}(0,t_2',\delta^+)\le 0$.
 	Thus, for any $t_1',t_2'\ge0$, by part (b) of Theorem \ref{value of orders tau}, for any \rev{$\delta^+\in\mathbb{\overline{Z}}$},
 	\rev{\begin{align*}
 	H^{+}(t_1',t_2',\delta^+)=E[H^{+}(0, t_2',\delta^+-\Delta p[0,t_1'])]\leq 0.
 	\end{align*}}
 	
Next, suppose \rev{$\lambda^+>\lambda/2$}. When $t_1'=0$, by part (a) of Theorem \ref{value of orders tau}, when \rev{$\delta^+=0$},
 	\rev{\begin{align*}
 	H^{+}(0,t_2',0)=\left(\frac{\lambda^+}{\lambda^++\lambda/2}-0.5 \right) \cdot E\left[\mathds{1}_{fill^+_{0,t_2',0}} \right]>0.
 	\end{align*}}
 	Now consider $t_1'>0$.
 	By parts (a) and (b) of Theorem \ref{value of orders tau}, for any \rev{$\delta^+\in\mathbb{Z}$}, we have
 	\rev{\begin{align*}
 	H^{+}(t_1',t_2',\delta^+)
 	=&\sum\limits_{k=-\infty}^{\infty}H^{+}(0,t_2',\delta^+-k)\mathbb{P}(\Delta p[0,t_1']=k)\\
 	=&-0.5\mathbb{P}(\Delta p[0,t_1']\geq\delta^++1)+\left(\frac{\lambda^+}{\lambda^++\lambda/2}-0.5 \right)\sum\limits_{k=-\infty}^{\delta^+}
 	E[\mathds{1}_{fill^+_{0,t_2',\delta^+-k}}]
 	\mathbb{P}(\Delta p[0,t_1']=k)\\
 	\geq&-0.5\mathbb{P}(\Delta p[0,t_1']\geq\delta^++1)+
 	\left(\frac{\lambda^+}{\lambda^++\lambda/2}-0.5\right) \cdot
 	E[\mathds{1}_{fill^+_{0,t_2',0}}] \cdot \mathbb{P}(\Delta p[0,t_1']=\delta^+).
 	\end{align*}}
 	We claim that
 	\begin{align}\label{lemma eq:1}
 	\lim\limits_{\delta^+\rightarrow\infty}\frac{\mathbb{P}(\Delta p[0,t_1']\geq\rev{\delta^+})}{\mathbb{P}(\Delta p[0,t_1']=\rev{\delta^+-1})}=0,
 	\end{align}
 	and hence \rev{$H^{+}(t_1',t_2',\delta^+)>0$} if \rev{$\delta^+$} is sufficiently large.
 	
 	To prove Equation \eqref{lemma eq:1}, we first note that we only need to prove
 	\rev{\begin{align}\label{lemma eq:2}
 	\lim\limits_{\delta^+\rightarrow\infty}\frac{\mathbb{P}(\Delta p[0,t_1']=\delta^+)}{\mathbb{P}(\Delta p[0,t_1']=\delta^+-1)}=0.
 	\end{align}}
 	This is because, if Equation \eqref{lemma eq:2} holds, then there exists a constant $c\in(0,1)$, such that for \rev{$\delta^+$} sufficiently large, $\frac{\mathbb{P}(\Delta p[0,t_1']=\rev{\delta^+})}{\mathbb{P}(\Delta p[0,t_1']=\rev{\delta^+-1})}<c$.
 	Thus, as \rev{$\delta^+ \rightarrow \infty$},
 	\begin{align*}
 	\frac{\mathbb{P}(\Delta p[0,t_1']\geq\rev{\delta^+})}{\mathbb{P}(\Delta p[0,t_1']=\rev{\delta^+-1})}
 	\leq&\frac{(1+c+c^2+...)\mathbb{P}(\Delta p[0,t_1']=\rev{\delta^+})}{\mathbb{P}(\Delta p[0,t_1']=\rev{\delta^+-1})}
 	=\frac{\frac{1}{1-c}\mathbb{P}(\Delta p[0,t_1']=\rev{\delta^+})}{\mathbb{P}(\Delta p[0,t_1']=\rev{\delta^+-1})}\rightarrow 0.
 	\end{align*}
 	
 	Next, we prove Equation \eqref{lemma eq:2}. For any \rev{$0\leq\delta^+\in\mathbb{Z}$}, we have
 	\begin{equation}\label{lemma eq:3}
 	\begin{split}
 	&\frac{\mathbb{P}(\Delta p[0,t_1']=\rev{\delta^+})}{\mathbb{P}(\Delta p[0,t_1']=\rev{\delta^+-1)}}
 	=\frac{\sum\limits_{k=\rev{\delta^+}}^{\infty}\mathbb{P}(\Delta p[0,t_1']=\rev{\delta^+}\mid \mathcal{N}(t_1')=k)\mathbb{P}( \mathcal{N}(t_1')=k)}
 	{\sum\limits_{k=\rev{\delta^+}}^{\infty}\mathbb{P}(\Delta p[0,t_1']=\rev{\delta^+-1}\mid \mathcal{N}(t_1')=k-1)\mathbb{P}( \mathcal{N}(t_1')=k-1)},
 	\end{split}
 	\end{equation}
 	noting that if $k<\rev{\delta^+}$, then $\mathbb{P}(\Delta p[0,t_1']=\rev{\delta^+}\mid \mathcal{N}(t_1')=k)=0$.
 	For any \rev{$0 \leq\delta^+\leq k$}, we have
 	\rev{\begin{equation*}
 	\mathbb{P}(\Delta p[0,t_1']=\delta^+\mid \mathcal{N}(t_1')=k)
 	=\left\{
 	\begin{split}
 	&\binom{k}{\frac{k+\delta^+}{2}}\frac{1}{2^k}, &\text{if } k \text{ and } \delta^+ \text{ have the same parity},\\
 	&0, &\text{if } k \text{ and } \delta^+ \text{ have opposite parity}.
 	\end{split}
 	\right.
 	\end{equation*}}
 	Thus, for any $0\leq\rev{\delta^+}\leq k$, if $\rev{\delta^+}$ and $k$ have the same parity, then
 	\begin{align*}
 	&\frac{\mathbb{P}(\Delta p[0,t_1']=\rev{\delta^+}\mid \mathcal{N}(t_1')=k)\mathbb{P}( \mathcal{N}(t_1')=k)}
 	{\mathbb{P}(\Delta p[0,t_1']=\rev{\delta^+-1}\mid \mathcal{N}(t_1')=k-1)\mathbb{P}( \mathcal{N}(t_1')=k-1)}\\
 	=&\frac{\binom{k}{\frac{k+\rev{\delta^+}}{2}}\frac{1}{2^k}e^{-\lambda t_1'}(\lambda t_1')^k/k!}
 	{\binom{k-1}{\frac{k-1+\rev{\delta^+-1}}{2}}\frac{1}{2^{k-1}}e^{-\lambda t_1'}(\lambda t_1')^{k-1}/(k-1)!}
 	=\frac{\lambda t_1'}{k+\rev{\delta^+}}\leq \frac{\lambda t_1'}{2\rev{\delta^+}}.
 	\end{align*}
 	Therefore, by Equation \eqref{lemma eq:3}, we obtain that
 	$\frac{\mathbb{P}(\Delta p[0,t_1']=\rev{\delta^+})}{\mathbb{P}(\Delta p[0,t_1']=\rev{\delta^+-1})}
 	\leq \frac{\lambda t_1'}{2\rev{\delta^+}}\rightarrow 0,$
 	as $\rev{\delta^+}\rightarrow \infty$.
 	Thus, Equation \eqref{lemma eq:2} holds and the proof of part (c) is complete for the ask side. The proof for the bid part is similar and hence omitted.
\end{proof}

\subsection{Proof of Theorem \ref{BellmanThm}}
\begin{proof}
	We prove Theorem \ref{BellmanThm} by backward induction and using the Bellman Equation~\eqref{Bellman}. Recall the definitions of \rev{$\mathds{1}_{fill^+_i}, \mathds{1}_{fill^+_{i+}}, \mathds{1}_{fill^-_i},  \mathds{1}_{fill^-_{i+}}, \Delta p_i, \Delta p_{i+}, p_{fill^+_{i+}}, p_{fill^-_{i+}}$} given in Section~\ref{sec:dynamics}.

 For $i=N$, by the Bellman equation~\eqref{Bellman}, for any \rev{$s=(w,p,q,r^+,r^-)\in S$}, we have
	\rev{\begin{align*}
	v_N(s)
	=&E[w+(p+0.5+r^+)\mathds{1}_{fill^+_N}-(p-0.5-r^-)\mathds{1}_{fill^-_N}+(p+\Delta p_N)(q-\mathds{1}_{fill^+_N}+\mathds{1}_{fill^-_N})\\
	\quad&-0.5|q-\mathds{1}_{fill^+_N}+\mathds{1}_{fill^-_N}|\;\big| s_N=s] \\
	=&w+pq+H^{+}(0,\Delta\tau,r^+)+H^{-}(0,\Delta\tau,r^-)-0.5E[\;|q-\mathds{1}_{fill^+_0}+\mathds{1}_{fill^-_0}|\;\big | (r^+_0,r^-_0)=(r^+,r^-)],\\
	=&w+pq+g_N(q,r^+,r^-),
	\end{align*}}
\gao{where we have used the stationarity of the MDP and the definitions of $H^{\pm}$ and $g_N$}.
	
For $i=0,1,...,N-1$, assume \rev{$v_{i+1}(s)=w+pq+g_{i+1}(q,r^+,r^-)$} for any $s=(w,p,q,r^+,r^-)\in S$, then for any \rev{$s=(w,p,q,r^+,r^-)\in S$}, we can compute that
	\rev{\begin{align*}
	v_{i}(s)
	=&\max\limits_{(\delta^+,\delta^-)\in A_s}E[v_{i+1}(s_{1})\mid s_0=s, (\delta^+_0,\delta^-_0)=(\delta^+,\delta^-)]\\
	=&\max\limits_{(\delta^+,\delta^-)\in A_s}E
	[w+(p+0.5+r^+)\mathds{1}_{fill^+_0}-(p-0.5-r^-)\mathds{1}_{fill^-_0}+p_{fill^+_{0+}}\mathds{1}_{fill^+_{0+}}-p_{fill^-_{0+}}\mathds{1}_{fill^-_{0+}}\\
	\quad  &+(p+\Delta p_0+\Delta p_{0+})(q-\mathds{1}_{fill^+_0}+\mathds{1}_{fill^-_0}-\mathds{1}_{fill^+_{0+}}+\mathds{1}_{fill^-_{0+}})\\
	\quad  &+g_{i+1}(q_1,r^+_1,r^-_1) \mid s_0=s,(\delta^+_0,\delta^-_0)=(\delta^+,\delta^-)],
	\end{align*}}
	\gao{where we have used the Bellman equation~\eqref{Bellman}, the assumption for $v_{i+1}$ and the system dynamics in Equations \eqref{w i to i+1}-\eqref{q i to i+1}}.
\rev{Reorganizing the terms and using the fact that $\Delta p_{0+}$ is independent with $\mathds{1}_{fill^+_0}$ and $\mathds{1}_{fill^-_0}$, we obtain}
	\rev{\begin{align*}
	v_{i}(s)
	=&w+pq+H^{+}(0,\Delta\tau,r^+)+H^{-}(0,\Delta\tau,r^-)\\
	&+\max\limits_{(\delta^+,\delta^-)\in A_s}
	\Big\{{E[(p_{fill^+_{0+}}-p-\Delta p_0-\Delta p_{0+})\mathds{1}_{fill^+_{0+}}]}\\
	&+{E[(p+\Delta p_0+\Delta p_{0+}-p_{fill^-_{0+}})\mathds{1}_{fill^-_{0+}}]}\\
	\quad&+E[g_{i+1}(q_1,r^+_1,r^-_1)\mid (q_0,r^+_0,r^-_0)=(q,r^+,r^-), (\delta^+_0,\delta^-_0)=(\delta^+,\delta^-)]\Big \}.
	\end{align*}}
	By the definition of the function $G_i$, it remains to prove that, for any possible \rev{$(r^+,r^-,\delta^+,\delta^-)$},
	\rev{\begin{equation} \label{value decom}
	\begin{aligned}
	&H^{+}(0,\Delta\tau,r^+)+E[(p_{fill^+_{0+}}-p-\Delta p_0-\Delta p_{0+})\mathds{1}_{fill^+_{0+}}\mid (r^+_0,\delta^+_0)=(r^+,\delta^+)]=H^{+}_{act}(r^+,\delta^+),  \\
	&H^{-}(0,\Delta\tau,r^-)+E[(p+\Delta p_0+\Delta p_{0+}-p_{fill^-_{0+}})\mathds{1}_{fill^-_{0+}}\mid (r^-_0,\delta^-_0)=(r^-,\delta^-)]=H^{-}_{act}(r^-,\delta^-).
		\end{aligned}
	\end{equation}}
We prove the first equation (i.e. ask side) in \eqref{value decom}. For \rev{$\delta^+\in\mathbb{\overline{Z}}$}, it directly follows from the definitions of \rev{$p_{fill^+_{0+}}$} (given after equation \eqref{a i to i+1}) and \rev{$H^{+}_{act}(r^+,\delta^+)$}.
	For \rev{$\delta^+=o$}, we need to show
	\rev{\begin{equation}\label{value decom2}
	H^{+}(0,\Delta\tau,r^+)+E[(r^+-\Delta p_0-\Delta p_{0+})\mathds{1}_{fill^+_{0+}}\mid (r^+_0,\delta^+_0)=(r^+,o)]=H^{+}(0,\Delta t,r^+)
	\end{equation}}
This can also be readily verified by using the definition of \rev{$H^{+}$} in \eqref{H(xy)} and the fact that the price process $p(\cdot)$ is a martingale. The proof is thus complete.
\end{proof}


\subsection{Proof of Theorem \ref{net profit}}\label{proof net profit}  

To prove Theorem \ref{net profit}, we need a lemma showing that when other parameters fixed, the \rev{expected profit} of the market maker \rev{$\mathcal{P}$} is a non-decreasing function of the number of quoting periods $N$.
\begin{lemma}\label{NP vs N}
	Fix latency $\Delta \tau \ge 0$ and other
	parameters $\lambda,\rev{\lambda^+,\lambda^-}, \Delta t, \overline{q},\underline{q}$. We have \rev{$\mathcal{P}$} is a non-decreasing function of $N$.
\end{lemma}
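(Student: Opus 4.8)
The plan is to show that granting the market maker one extra quoting period can never hurt, by exhibiting a \emph{feasible but suboptimal} policy in the $(N+1)$-period problem that reproduces the optimal $N$-period expected profit. To track the horizon explicitly, I write $g_i^{(N)}$ and $\mathcal{P}^{(N)}$ for the objects of Theorem~\ref{BellmanThm} and \eqref{NP 1} attached to a problem with $N$ quoting periods. Since the initial inventory is zero, the decomposition \eqref{Bellman v_i} gives at once
\[
\mathcal{P}^{(N)} = v_0(w,p,0,\infty,\infty) - w = g_0^{(N)}(0,\infty,\infty),
\]
which in particular is independent of the starting cash $w$ and price $p$.

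The first key ingredient is a stationarity (time-homogeneity) property. Because the one-period transition law and the one-period reward $H^{+}_{act}+H^{-}_{act}$ appearing in \eqref{H_i 5} carry no explicit dependence on the period index $i$, the function $g_i^{(N)}$ depends on $i$ and $N$ only through the number of remaining periods $N-i$. Equating $N-i$ across the two horizons, I would record the identity
\[
g_1^{(N+1)}(q,r^+,r^-) = g_0^{(N)}(q,r^+,r^-),
\]
which I would prove by a routine backward induction starting from the common terminal stage $g_{N+1}^{(N+1)} = g_N^{(N)}$ (the liquidation penalty in \eqref{Bellman g_i}) and propagating through the identical recursions \eqref{Bellman g_i}--\eqref{H_i 5}.

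The second step is to lower bound $\mathcal{P}^{(N+1)} = g_0^{(N+1)}(0,\infty,\infty)$ by evaluating the maximization in \eqref{Bellman g_i} at the single action ``cancel and post nothing on both sides,'' i.e. $(\delta^+,\delta^-)=(\infty,\infty)$. This action is admissible from $s=(w,p,0,\infty,\infty)$ since $q=0$ is strictly interior ($\underline{q}<-1<0<1<\overline{q}$), so $s\notin\underline{S}\cup\overline{S}$ and $(\infty,\infty)\in A_s$. For this action the outstanding-order reward terms vanish, $H^{\pm}_{act}(\infty,\infty)=H^{\pm}(0,\Delta\tau,\infty)+H^{\pm}(\Delta\tau,\Delta t-\Delta\tau,\infty)=0$, because $H^\pm(\cdot,\cdot,\infty)\equiv 0$; and since no orders sit in the book during the period, the dynamics \eqref{q i to i+1}--\eqref{a i to i+1} (together with the conventions $\infty\mp x=\infty$) force the deterministic transition $(q_1,r^+_1,r^-_1)=(0,\infty,\infty)$ with $w_1=w$. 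Hence $G_0(0,\infty,\infty,\infty,\infty)=g_1^{(N+1)}(0,\infty,\infty)$, and chaining the pieces yields
\[
\mathcal{P}^{(N+1)} = g_0^{(N+1)}(0,\infty,\infty) \ge G_0(0,\infty,\infty,\infty,\infty) = g_1^{(N+1)}(0,\infty,\infty) = g_0^{(N)}(0,\infty,\infty) = \mathcal{P}^{(N)}.
\]

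The only delicate point is the stationarity identity $g_1^{(N+1)}=g_0^{(N)}$; everything else is direct substitution into Theorem~\ref{BellmanThm}. I expect the care to lie in setting up the induction cleanly: one must observe that the recursion \eqref{Bellman g_i}--\eqref{H_i 5} is autonomous in $i$, so that ``shifting the clock forward by one period'' is precisely the operation that adds one unit to the remaining horizon, and one must verify that the do-nothing transition is genuinely deterministic under the $\pm\infty$ arithmetic conventions. With these in hand the monotonicity of $\mathcal{P}^{(N)}$ in $N$ follows immediately.
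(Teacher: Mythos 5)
Your proposal is correct and follows essentially the same route as the paper's proof: both identify $\mathcal{P}$ with $g_0(0,\infty,\infty)$, establish the horizon-shift identity (the paper phrases it as $v^{n}_0(s)=v^{n+1}_1(s)$ for all $s$, proved by noting the two backward recursions coincide and start from the same terminal function), and then lower bound the initial supremum in the $(N+1)$-period problem by the admissible action $(\delta^+,\delta^-)=(\infty,\infty)$, whose reward is zero and whose transition from $(0,\infty,\infty)$ is deterministic. The only difference is cosmetic: you work with the $g$ functions while the paper works with the value functions $v_i$, which are equivalent via the decomposition in Theorem~\ref{BellmanThm}.
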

\begin{proof}[Proof of Lemma \ref{NP vs N}]
The main idea is given as follows. Comparing two MDP problems with $N$ and $N+1$ periods respectively, the value function at time $t_1$ in the latter is the same as the value function at $t_0$ (i.e., the initial one) in the former, because they can be computed by the same backward recursion (Bellman equation) from the same terminal value function.
For \rev{$s=(w,p,0,\infty,\infty)$}, the value function at $t_0$ in the $N+1$ period problem is greater than or equal to that at $t_1$ in the same problem, because the maker can choose to post no orders in the initial action. Thus, \rev{$\mathcal{P}$} with $N+1$ periods is greater than or equal to that with $N$ periods.

Mathematically, by Theorem \ref{BellmanThm} and equation \eqref{NP 1}, \rev{$\mathcal{P}=g_0(0,\infty,\infty)$} is a function of $N$. Denote this function by \rev{$f_{\mathcal{P}}(N)$}. Clearly we have \rev{$f_{\mathcal{P}}(0)=0$}.
For the two MDP problems with $N=n\geq 1$ and $N=n+1$, denote the value functions and corresponding $g$ function by $v^{n}_i(s), g^{n}_i(s), i=0,1,...,n$ and $v^{n+1}_i(s), g^{n+1}_i(s), i=0,1,...,n+1$ respectively.
By Theorem \ref{BellmanThm}, functions $v^{n}_0(s)$ and $v^{n+1}_1(s)$ are computed by the same backward recursion process starting from the same function
\rev{\begin{align*}
v^{n}_n(s)=v^{n+1}_{n+1}(s)=&w+pq+H^{+}(0,\Delta \tau,r^+)+H^{-}(0,\Delta \tau,r^-)\\
&-0.5E[\;|q-\mathds{1}_{fill^+_0}+\mathds{1}_{fill^-_0}|\; \big| (r^+_0,r^-_0)=(r^+,r^-)],
\end{align*}}
for any \rev{$s=(w,p,q,r^+,r^-)\in S$}. Therefore, for any $s\in S$, $v^{n}_0(s)=v^{n+1}_1(s)$. By Theorem \ref{BellmanThm}, we have for any $w,p\in \mathbb{Z}$
\rev{\begin{align*}
v^{n+1}_0(w,p,0,\infty,\infty) =&w+p\cdot 0+	\max\limits_{(\delta^+,\delta^-)\in A_s} \Big\{H^{+}_{act}(\infty,\delta^+)+H^{-}_{act}(\infty,\delta^-)+E[g^{n+1}_{1}(q_1,r^+_1,r^-_1)\\
& \qquad \qquad  \qquad \qquad \qquad \qquad \mid (q_0,r^+_0,r^-_0)=(0,\infty,\infty),(\delta^+_0,\delta^-_0)=(\delta^+,\delta^-)]\Big\}\\
\geq&w+H^{+}_{act}(\infty,\infty)+H^{-}_{act}(\infty,\infty)\\
&+E[g^{n+1}_{1}(q_1,r^+_1,r^-_1)
\mid (q_0,r^+_0,r^-_0)=(0,\infty,\infty),(\delta^+_0,\delta^-_0)=(\infty,\infty)]\\
=&w+g_{1}^{n+1}(0,\infty,\infty)=v^{n+1}_1(w,p,0,\infty,\infty).
\end{align*}}
Thus, we obtain that
\rev{\begin{align*}
f_{\mathcal{P}}(n+1)&=v^{n+1}_0(w,p,0,\infty,\infty)-w\\
&\geq v^{n+1}_1(w,p,0,\infty,\infty)-w=v^{n}_0(w,p,0,\infty,\infty)-w=f_{\mathcal{P}}(n).
\end{align*}}
The proof is therefore complete.
\end{proof}


\begin{proof} [Proof of Theorem \ref{net profit}.]
We first prove part (1). By part (c) of Theorem~\ref{value of orders tau}, if \rev{$\lambda^+\leq \lambda/2$ and $\lambda^-\leq \lambda/2$}, functions \rev{$H^{+},H^{-}$} are always non-positive and hence by the definitions of \rev{$H^{+}_{act},H^{-}_{act}$}, for any possible \rev{$(r^+,r^-,\delta^+,\delta^-)$, $H^{+}_{act}(r^+,\delta^+), H^{-}_{act}(r^-,\delta^-)$} are both non-positive.
We prove \rev{$g_i(q,r^+,r^-)\leq 0$}, for $i=0,1,...,N$ and any admissible \rev{$(q,r^+,r^-)$} by the backward recursion in Theorem \ref{BellmanThm}. For $i=N$, it holds directly from Equation \eqref{Bellman g_i}.
Suppose for some $i$ ($1\leq i\leq N$), \rev{$g_i(q,r^+,r^-)\leq 0$} for any admissible \rev{$(q,r^+,r^-)$}. Then from equations (\ref{Bellman g_i}) and (\ref{H_i 5}), we obtain that 
\rev{\begin{align*}
g_{i-1}(q,r^+,r^-)
=&\max\limits_{(\delta^+,\delta^-)\in A_s} \Big\{H^{+}_{act}(r^+,\delta^+)+H^{-}_{act}(r^-,\delta^-)+E[g_i(q_1,r^+_1,r^-_1)\\
 & \quad\ \quad \mid (q_0,r^+_0,r^-_0)=(q,r^+,r^-),(\delta^+_0,\delta^-_0)=(\delta^+,\delta^-)]\Big \}\leq 0.
\end{align*}}
Therefore, \rev{$g_i(q,r^+,r^-)\leq 0$}, for $i=0,1,...,N$ and any admissible \rev{$(q,r^+,r^-)$}. It follows from Equation \eqref{NP 1} that
$\mathcal{P}=g_0(0,\infty,\infty)\leq 0.$ Since $\mathcal{P}$ is lower bounded by zero, we get \rev{$\mathcal{P}=0$}.

Next, we prove part (2). Suppose \rev{$\lambda^+>\lambda/2$ and $\lambda^->\lambda/2$}. By part (c) of Theorem~\ref{value of orders tau}, there exists a quote pair, denoted by \rev{($\delta^{+,M},\delta^{-,M})\in\mathbb{\overline{Z}}^2$}, such that \rev{$H^{+}(\Delta\tau,\Delta t,\delta^{+,M})>0$ and
$H^{-}(\Delta\tau,\Delta t,\delta^{-,M})>0$}. Obviously, we have \rev{$\delta^{+,M},\delta^{-,M}\in \mathbb{Z}$} and if $\Delta\tau=0$ then \rev{$\delta^{+,M},\delta^{-,M}\ge 0$}, otherwise their order values will be 0 or $-0.5$.
The main idea is that we construct an admissible policy using this quote pair \rev{$(\delta^{+,M},\delta^{-,M})$}, following which the expected profit is positive if $N$ is a sufficiently large even number. Then, $N_{min}$ exists since \rev{$\mathcal{P}$} is a non-increasing function of $N$ by Lemma \ref{NP vs N}. We also give an upper bound of $N_{min}$. 

First, we define the admissible policy.
Denote by $v^{\pi}_i(s)$ the expected \rev{$\mathcal{W}$} following any admissible policy $\pi=\{f_i: i=0,1,...N\}$ starting from time $t_i$ with an initial state \rev{$s=(w,p,q,r^+,r^-)\in S$}.
We consider an admissible policy $\tilde{\pi}=\{\tilde{f_i}: i=0,1,...N\}$ in any of our MDP problem with an even $N\geq4$, i.e., $N=2K$ for some $2\leq K\in \mathbb{N}$.
Starting from time $0$ with an initial state \rev{$(w,p,0,\infty,\infty)$} for any $w,p\in\mathbb{Z}$, $\tilde{\pi}$ is defined as follows. For $i=1,3,5,...,N-1$, i.e., $i$ is odd, for any \rev{$w,p\in \mathbb{Z}, r^+,r^-\in \mathbb{\overline{Z}}$,  $q\in\{-1, 0, 1\}$ such that $(w,p,q,r^+,r^-)\in S$}, define
\rev{\begin{equation*}
\tilde{f}_i(w,p,q,r^+,r^-):=(\infty,\infty).
\end{equation*}}
For $i=2,4,6,...,N-2$, i.e., $i$ is even except 0 and $N$, for any $w,p\in \mathbb{Z}$,  $q\in\{-1, 0, 1\}$, define
\rev{\begin{equation*}
\tilde{f}_i(w,p,q,\infty,\infty):=\left\{
\begin{split}
&(\delta^{+,M},\infty), \quad &q= 1,\\
&(\infty,\infty), \quad &q=0,\\
&(\infty,\delta^{-,M}), \quad &q= -1.\\
\end{split}
\right.
\end{equation*}}
For any $w,p\in\mathbb{Z}$, define \rev{$\tilde{f}_0(w,p,0,\infty,\infty):=(\delta^{+,M},\delta^{-,M})$}. For a state $s$, if \rev{$(\delta^+,\delta^-)=\tilde{f}_i(s)$}, we also write \rev{$\delta^+=\tilde{f}^+_i(s), \delta^-=\tilde{f}^-_i(s)$}, $i=0,1,...,N-1$. Under this policy, at time $t_i, i=2,4,6,....,N-2$,
the maker will post no orders if his inventory is 0, sell one unit at the relative price \rev{$\delta^{+,M}$} if his inventory is 1, and buy one unit at \rev{$\delta^{-,M}$} if his inventory is -1. At time $t_i, i=1,3,5,....,N-1$, the maker cancels old orders if any and does not post any new orders. At time $t_N$, the maker unwinds his inventory if any. At time $t_0=0$, there are no inventories nor outstanding orders and the maker quotes at \rev{$(\delta^{+,M},\delta^{-,M})$}. Therefore, at time $t_i$, $i=0,2,4,6,...,N-2$, there are no outstanding orders. Moreover, the inventory of the maker always belongs to  $\{-1, 0, 1\}$.

Then, we give the backward recursion for this policy. Standard arguments in MDP theory yield that for $i=0,2,4,...,N-2$, for any $w,p\in \mathbb{Z}$,  $q\in\{-1, 0, 1\}$,
\rev{\begin{align*}
v^{\tilde{\pi}}_i(w,p,q,\infty,\infty)
=&E[v^{\tilde{\pi}}_{i+2}(w_{i+2},p_{i+2},q_{i+2},\infty,\infty) \mid (w_i,p_i,q_i,r^+_i,r^-_i)=(w,p,q,\infty,\infty),\\
\quad & (\delta^+_i,\delta^-_i)=\tilde{f}_i(w,p,q,\infty,\infty), (\delta^+_{i+1},\delta^-_{i+1})=(\infty,\infty)].
\end{align*}}
Using a similar argument as in the proof of Theorem \ref{BellmanThm}, we obtain that $ v_i^{\tilde{\pi}}(w,p,q,\infty,\infty)
=w+pq+g_i^{\tilde{\pi}}(q),$ for $i=2,4,6,...,N$ and for any $w,p\in \mathbb{Z}$, $q\in\{-1,0,1\}$.
Here $g^{\tilde{\pi}}_N(q)=-0.5|q|$, and
\rev{\begin{equation}\label{thm4 gi}
\begin{aligned}
g^{\tilde{\pi}}_i(q)
=&H^{+}(\Delta\tau,\Delta t,\tilde{f}^+_i(w,p,q,\infty,\infty))+H^{-}(\Delta\tau,\Delta t,\tilde{f}^-_i(w,p,q,\infty,\infty))\\
&+E[g^{\tilde{\pi}}_{i+2}(q_2)\mid (q_0,r^+_0,r^-_0)=(q,\infty,\infty),\\
\quad &  \qquad \qquad (\delta^+_0,\delta^-_0)=\tilde{f}_i(w,p,q,\infty,\infty),
(\delta^+_{1},\delta^-_{1})=(\infty,\infty)],\text{ for } i=2,4,6,...,N-2.
\end{aligned}
\end{equation}}
To understand Equation \ref{thm4 gi}, note that for each order sent at time $t_i, i=2,4,6,...N-2$, the lifetime is $\Delta t$ due to the cancellation instruction in the next period and there are no outstanding orders at time $t_i, i=2,4,6,...N-2$.
Similarly, for any $w,p\in \mathbb{Z}$, we have
\rev{$v_0^{\tilde{\pi}}(w,p,0,\infty,\infty)
=w+p\cdot0+g^{\tilde{\pi}}_0(0)
=w+g_0^{\tilde{\pi}}(0),$}
where
\rev{\begin{equation}\label{thm4 g0}
\begin{aligned}
g^{\tilde{\pi}}_0(0)
=&H^{+}(\Delta\tau,\Delta t,\delta^{+,M})+H^{-}(\Delta\tau,\Delta t,\delta^{-,M})
+E[g^{\tilde{\pi}}_{2}(q_2)\mid (q_0,r^+_0,r^-_0)=(0,\infty,\infty),\\
& \quad \qquad (\delta^+_0,\delta^-_0)=(\delta^{+,M},\delta^{-,M}),
(\delta^+_{1},\delta^-_{1})=(\infty,\infty)].
\end{aligned}
\end{equation}}

Next, we prove that if $N$ is sufficiently large, then $g^{\tilde{\pi}}_0(0)>0$. To show this, by Equation \eqref{thm4 g0}, we only need to prove  $g^{\tilde{\pi}}_{2}(q)\geq0$ for $q=-1,0,1$, since \rev{$H^{+}(\Delta\tau,\Delta t,\delta^{+,M})+H^{-}(\Delta\tau,\Delta t,\delta^{-,M})>0$}.
First, we prove $g^{\tilde{\pi}}_{2}(0)=0$.
By Equation \eqref{thm4 gi}, we obtain that, for $i=2,4,6,...,N-2$,
\rev{\begin{align*}
g^{\tilde{\pi}}_i(0)
=&H^{+}(\Delta\tau,\Delta t,\infty)+H^{-}(\Delta\tau,\Delta t,\infty)\\
& \quad +E[g^{\tilde{\pi}}_{i+2}(q_2)\mid (q_0,r^+_0,r^-_0)=(0,\infty,\infty),(\delta^+_0,\delta^-_0)=(\infty,\infty),
(\delta^+_{1},\delta^-_{1})=(\infty,\infty)]\\
=&g^{\tilde{\pi}}_{i+2}(0).
\end{align*}}
Hence, for $i=2,4,6,...,N-2$, $g^{\tilde{\pi}}_i(0)=g^{\tilde{\pi}}_N(0)=0$.
Then, we prove that if $N$ is sufficiently large, then $g^{\tilde{\pi}}_{2}(\pm1)>0$. Recall that \rev{$\mathds{1}_{fill^+_{\Delta\tau,\Delta t,\delta^{+,M}}}$} specifies whether the ask order sent at time 0 and the relative price \rev{$\delta^{+,M}$} with latency $\Delta\tau$ is filled in the time interval $[\Delta\tau,\Delta\tau+\Delta t)$.
It can be readily verified that given \rev{$\delta^+_0=\delta^{+,M}$} we have \rev{$\mathds{1}_{fill^+_{\Delta\tau,\Delta t,\delta^{+,M}}}=\mathds{1}_{fill^+_{0+}}+\mathds{1}_{fill^+_1}$}.
Intuitively, the ask order is filled if and only if either it is filled in the time interval $[\Delta\tau,\Delta t)$, which is represented by \rev{$\mathds{1}_{fill^+_{0+}}=1$}, or it stays as an outstanding order at time $t_1=\Delta t$, and then filled in the time interval $[\Delta t, \Delta \tau+\Delta t)$, which is represented by \rev{$\mathds{1}_{fill^+_{1}}=1$}.
Denote the fill probability of this ask order quoted at relative price \rev{$\delta^{+,M}$ by $p^+$}, where
\rev{\begin{equation}\label{eq:pa}
p^+=\mathbb{P}(\mathds{1}_{fill^+_{\Delta \tau,\Delta t,\delta^{+,M}}}=1).
\end{equation}}
By Equation \eqref{thm4 gi}, for $i=2,4,6,...,N-2$, we have
\rev{\begin{align*}
g^{\tilde{\pi}}_i(1)
=&H^{+}(\Delta \tau,\Delta t,\delta^{+,M})
+E[g^{\tilde{\pi}}_{i+2}(1-\mathds{1}_{fill^+_{0+}}-\mathds{1}_{fill^+_1})
\mid (q_0,r^+_0,r^-_0)=(1,\infty,\infty),\\
\quad&(\delta^+_0,\delta^-_0)=(\delta^{+,M},\infty),
(\delta^+_{1},\delta^-_{1})=(\infty,\infty)]\\
=&H^{+}(\Delta\tau,\Delta t,\delta^{+,M})+(1-p^+)g^{\tilde{\pi}}_{i+2}(1),
\end{align*}}
where we have used $g^{\tilde{\pi}}_{i+2}(0)=0$ and the fact that the random variable \rev{$\mathds{1}_{fill^+_{\Delta\tau,\Delta t,\delta^{+,M}}}$} does not depend on \rev{$q_0,r^+_0,r^-_0,\delta^+_0,\delta^-_0,\delta^+_1$ or $\delta^-_1$}.
It is similar for the bid side. Define the fill probability of the bid order quoted at \rev{$\delta^{-,M}$} by
\rev{$p^-$}, where
\rev{\begin{equation} \label{eq:pb}
p^-=\mathbb{P}(\mathds{1}_{fill^-_{\Delta\tau,\Delta t,\delta^{-,M}}}=1).
\end{equation}}
 For $i=2,4,6,...,N-2$, we have
\rev{\begin{align*}
g^{\tilde{\pi}}_i(-1)
=&H^{-}(\Delta\tau,\Delta t,\delta^{-,M})+(1-p^-)g^{\tilde{\pi}}_{i+2}(-1).
\end{align*}}
Recall $N=2K$. Solving the above recursive equations for $g^{\tilde{\pi}}_i(\pm1)$ with $g^{\tilde{\pi}}_N(1)=g^{\tilde{\pi}}_N(-1)=-0.5$, we obtain that
\rev{\begin{eqnarray}
g^{\tilde{\pi}}_2(\pm 1)&=&\left(-0.5-\frac{H^{\pm}(\Delta\tau,\Delta t,\delta^{\pm,M})}{p^\pm} \right)(1-p^\pm)^{K-1}+\frac{H^{\pm}(\Delta\tau,\Delta t,\delta^{\pm,M})}{p^\pm}.
\end{eqnarray}}
Note that \rev{$0<p^\pm<1$ because $\delta^{\pm,M}\in \mathbb{Z}$ and if $\Delta \tau=0$, we have $\delta^{\pm,M}\ge0$ }. In addition, \rev{$H^{\pm}(\Delta\tau,\Delta t,\delta^{\pm,M})>0$}.
Thus, if $N$ is sufficiently large, then $g^{\tilde{\pi}}_2(\pm1)>0$ and hence $g^{\tilde{\pi}}_0(0)>0$.

Finally, since $\tilde{\pi}$ is an admissible policy, by Equation \eqref{NP 1}, we obtain that
\rev{\begin{equation*}
\mathcal{P}\geq v_0^{\tilde{\pi}}(w,p,0,\infty,\infty)-w=g_0^{\tilde{\pi}}(0)>0.
\end{equation*}}
Note that the backward recursion for value functions in each period depends on the model parameters $\lambda,\rev{\lambda^+,\lambda^-}, \Delta\tau, \Delta t, \overline{q}$ and $\underline{q}$.
 It follows from the monotonicity of \rev{$f_{\mathcal{P}}(N)$} (Lemma \ref{NP vs N}) and the fact \rev{$f_{\mathcal{P}}(N)>0$} when $N$ is even and sufficiently large that there exists a constant integer $N_{min}\geq1$ depending on $\lambda, \rev{\lambda^+,\lambda^-}, \Delta\tau, \Delta t, \overline{q}$ and $\underline{q}$, such that \rev{$f_{\mathcal{P}}(N)>0$} if and only if $N\geq N_{min}$. For an upper bound of $N_{min}$, define
\rev{\begin{equation*}
\overline{N}_{min}:=2\max\left\{\left\lceil \frac{\ln{\frac{H^{+}(\Delta\tau,\Delta t,\delta^{+,M})}{H^{+}(\Delta\tau,\Delta t,\delta^{+,M})+0.5p^+}}}{\ln{(1-p^+)}}\right\rceil,
\left\lceil\frac{\ln{\frac{H^{-}(\Delta\tau,\Delta t,,\delta^{-,M})}{H^{-}(\Delta\tau,\Delta t,\delta^{-,M})+0.5p^-}}}{\ln{(1-p^-)}}\right\rceil\right\}+2,
 \end{equation*}}
where \rev{$p^\pm$} are the fill probabilities of ask and bid orders given in \eqref{eq:pa} and \eqref{eq:pb}.
It can be readily verified that when $N\geq \overline{N}_{min}$, $g^{\tilde{\pi}}_2(\pm1)>0$ and hence $g^{\tilde{\pi}}_0(0)>0$. Therefore, $\overline{N}_{min}\geq N_{min}$. Now the proof for part (2) is complete.
\end{proof}

\section{Estimations of \rev{$\lambda^+$ and $\lambda^-$} for a hypothetical market maker} \label{sec:est-lambda-a}
In this section, we discuss how to estimate \rev{$\lambda^+$ and $\lambda^-$} for a particular market maker.
In our model, there are two cases for an order's execution. The first is that the order is filled and the mid-price does not move at the execution time, called by Type I event. The second is that the order is filled due to a price move, called by Type II event. Using similar arguments as in the proof of Part (a) of Theorem \ref{value of orders tau}, one can show that for the best bid order (\rev{$\delta^-=0$}) sent at time 0 with latency $\Delta\tau\ge0$, we have
\rev{\begin{equation} \label{estimationformula}
\mathbb{P}[p(t_{exe})=p(0) \mid p(\Delta\tau)=p(0), t_{exe}<t_{first move}]=\frac{\lambda^-}{\lambda^-+0.5\lambda},
\end{equation}}
where $t_{exe}$ is the execution time of the bid order and
\begin{equation*}
t_{firstmove}:=\inf\{t\ge0 \mid p(t)=p(0)+1 \}
\end{equation*}
is the first time that the market price moves up by 1 tick.
In \eqref{estimationformula}, the left hand side is the probability of a Type I event for the bid order conditioned on that the bid order enters into the best bid level correctly and the bid order is filled before market price moves up. Based on \eqref{estimationformula} and actual order submissions and executions, a market maker can first estimate the conditional probability and then use the estimated value of $\lambda$ to compute the estimated value of \rev{$\lambda^-$}.

We illustrate this estimation procedure using artificial order simulations based on NASDAQ's TotalView-ITCH message data. The simulation procedure is similar as in \cite{Moallemi2}. Without loss of generality we focus on bid orders for an illustration.
We randomly insert 500 artificial orders (subject to latency) at best bid price level in the orderbook each day, and
update the status of the orders at every event time under the matching rule of the exchange.
The artificial orders are assumed to be of infinitesimal size, so they do not impact the market. Given a fixed latency level,
only those orders for which the market price does not move in the latency window are considered. For any of such bid bids, suppose the order is filled before the market best bid price moves up. If at the execution time the market price does not move down, then the order's execution is counted as a Type I event; otherwise it is counted as a Type II event. Denote the total number of Type I events by $n_1$ and that of Type II events by $n_2$. By \eqref{estimationformula}, we have \rev{$\frac{n_1}{n_1+n_2}\approx \frac{\lambda^-}{\lambda^-+0.5\lambda}$} and thus the maker estimates \rev{$\lambda^-$} by $\frac{n_1\lambda}{2n_2}$. The rate \rev{$\lambda^+$} can be estimated similarly.

We now report the estimation results based on the data of GE in the forth quarter of 2016.
As for a given $\lambda$, the estimated values of \rev{$\lambda^+$ and $\lambda^-$} essentially depend on the ratio between the number of Type I events and that of Type II events, we focus on reporting the ratio \rev{$\lambda^-/(0.5 \lambda)$}.
As the results vary among different days, we show the average ratios for the quarter for different latency levels. See Table~\ref{esimationlambda^a}. We can observe that the ratio decreases from $1.07>1$, i.e., \rev{$\lambda^->0.5\lambda$} to $0.94<1$, i.e., \rev{$\lambda^-<0.5\lambda$} as latency increases. The reason is that given the market environment, a hypothetical market maker with lower latency can obtain better queue positions for his orders and suffer less from adverse selections on average. 
\begin{table}[h]
	\begin{center}
			\caption{\rev{$\lambda^-/(0.5 \lambda)$} for different latency for GE in 2016Q4.}
			{\begin{tabular}{@{}lccccccc}
					\hline
					$\Delta\tau (ms)$ &0 & 10  & 50 & 100 &  500 & 1000 & 2000 \\
					ratio \rev{$\lambda^-/(0.5 \lambda)$} & 1.07  & 1.04 & 1.03& 1.02  &  1.00 & 0.97& 0.94 \\
					\hline
			\end{tabular}}
			\label{esimationlambda^a}
	\end{center}
\end{table}


\renewcommand{\theequation}{C\arabic{equation}}
\setcounter{equation}{0}

\end{document}